\documentclass[12pt]{article}
\usepackage[a4paper]{geometry}
\usepackage[utf8]{inputenc}
\usepackage[T1]{fontenc}
\usepackage{lmodern}
\usepackage[british]{babel}
\usepackage{csquotes}
\usepackage[style=numeric]{biblatex}
\bibliography{PhD.bib}
\usepackage{enumitem}
\usepackage{ebproof}
\usepackage{hhline}
\usepackage{ circuitikz }
\usepackage[twopage]{rotating}

\usepackage{amsmath}
\usepackage{amsfonts}
\usepackage{amssymb}
\usepackage{ntheorem}
\usepackage{mathrsfs}
\usepackage{stmaryrd}
\usepackage{proof}
\usepackage{phonetic}
\usepackage{upgreek}
\usepackage{tipa}
\usepackage{graphicx}
\usepackage{yhmath}
\usepackage{mathdots}
\usepackage{MnSymbol}
\usepackage{calc}
\usepackage{layout}
\usepackage{thmtools}
\usepackage{fancyhdr}
\usepackage{longtable}
\usepackage[dvipsnames]{xcolor}

\usepackage{tikz}
\usetikzlibrary{shapes, arrows.meta}
\usepackage{tabularx,colortbl}
\usepackage[twopage]{rotating}
\usepackage{musicography}
\usepackage{multirow}
\usepackage{diagbox}
\definecolor{oxfordblue}{RGB}{4,30,66}
 
\allowdisplaybreaks

\theorembodyfont{\fontshape{rm}}

\newtheorem{definition}{Definition}
\newtheorem{theorem}{Theorem}
\newtheorem{lemma}{Lemma}
\newtheorem{corollary}{Corollary}

\newtheorem{example}{Example}

\renewtheorem{definition}{Definition}
\newtheorem*{proof}{Proof}

\makeatletter
\providecommand{\leftsquigarrow}{%
  \mathrel{\mathpalette\reflect@squig\relax}%
}
\newcommand{\reflect@squig}[2]{%
  \reflectbox{$\m@th#1\rightsquigarrow$}%
}
\makeatother

\usepackage[flushleft]{threeparttable}

\usepackage{setspace}
\usepackage{tcolorbox}
\usepackage{turnstile}

\usepackage{hyperref}
\title{Towards Automated Proof-Theoretic Semantics: Inference-Behaviour Semantics for 3-Dimensional \textbf{K3} and \textbf{LP}}
\author{Sophie Nagler\vspace{0.3cm}\\ Institute for Logic, Language and Computation (ILLC)\\  University of Amsterdam\\ P.O. Box 94242\\  1090 GE Amsterdam\\ The Netherlands\footnote{s.e.nagler@uva.nl} \vspace{0.1cm} \\ \& \vspace{0.1cm} \\ Arché Research Centre\\University of St Andrews\\Scotland\footnote{sen1@st-andrews.ac.uk}\vspace{0.3cm}}
\date{\today}

\begin{document}
\begin{titlepage}
 \maketitle   
\end{titlepage}
\begin{abstract}
Inference-behaviour Semantics (I-bS) is a sequent calculus-based substructural approach to proof-theoretic semantics (P-tS), focussed on modelling the relationships of connective meanings across different logics. This paper provides a case study of how I-bS can be extended to any finitely multi-dimensional sequent calculus. To this end, we give I-bS for the 3-dimensional sequent calculi \textbf{K3} (Strong Kleene) and \textbf{LP} (Logic of Paradox). We find that the \textbf{K3} and \textbf{LP} connectives have the same meaning and that their meaning conservatively extends the meaning of the corresponding classical \textbf{LK} connectives. We situate this result within the wider project of integrating I-bS with MUltlog. The goal is to automate the generation of I-bS for arbitrary multi-valued logics using a MUltlog-style system, thus contributing towards the computational automation of P-tS approaches.
\end{abstract}
\textbf{Keywords:} proof-theoretic semantics $\bullet$ connectives $\bullet$ automation $\bullet$ substructural logics $\bullet$ non-classical logics
\section{Introduction}
The automation of logical reasoning has a long history, from the development of computability theory and decision procedures \cite{church1936unsolvable, turing1936computable} to resolution and automated theorem proving \cite{robinson1965machine}, and large interactive proof assistants such as Rocq (formerly Coq) \cite{bertot2013interactive} and Lean \cite{de2015lean}. However, automating the provision of \textit{proof-theoretic semantics (P-tS)} for such reasoning remains largely an open problem. This paper is part of a larger project whose goal is the computer-assisted generation of proof-theoretic meaning specifications for the connectives of any finite-valued logic. To achieve this goal, it aims to integrate the \textit{inference-behaviour semantics (I-bS)} approach to P-tS \cite{nagler2026measuring, nagler2026inference} into MUltlog, a system that generates a sequent calculus, tableau and natural deduction systems for any finitely-valued propositional or first-order logic \cite{salzer1996multlog}. In this first foundational paper, we extend I-bS to MUltlog's underlying logical machinery of multi-dimensional sequent calculi. We demonstrate this extension via the case study of two 3-valued logics: Strong Kleene logic (\textbf{K3}) \cite{kleene1938notation, kleene1952introduction} and the Logic of Paradox (\textbf{LP}) \cite{priest1979logic}. 

P-tS is a heterogeneous family of approaches to formal semantics in which the meaning an expression is determined by its \textit{use in proofs} \cite{sep-proof-theoretic-semantics}. It operationalises the inferentialist slogan `meaning as use' in terms of \textit{proof rules} in sequent or natural deduction systems. One can think of the proof rules for an expression as formal instructions for its proper use in a context of reasoning. In the case of logical connectives (expressions such as `not', `if', `and'), we call these \textit{use-defining} rules \textit{operational} \cite{gentzen1935auntersuchungen, gentzen1935buntersuchungen}. P-tS thus allows to directly move from proof systems to meanings without needing additional mathematical resources such as a model theory.

However, this promise has not yet been cashed out computationally as P-tS has remained primarily a theoretical and philosophical enterprise. Whilst there is no systematic method for automatically generating semantic clauses for some arbitrary given proof system, there are existing tools for the generation of calculi themselves, in particular, MUltlog.

MUltlog, first presented in \cite{salzer1996multlog}, has been developed since the 1990s based on a series of results by the TU Vienna logic group \cite{baaz1993dual, baaz1993elimination, baaz1993systematic, salzer2000optimal, zach1993proof}. It is a Prolog-based system that takes as input an $n$-valued logic (for some finite $n \in\mathbb{N}$) given as a set of truth tables for the connectives and a selection of designated truth values, and automatically generates tableau, a natural deduction and---the focus of this paper---an $n$-dimensional (also called: $n$-sided) sequent calculus for that logic.  MUltlog, thus, automates the move from (model-theoretic) semantic clauses to proof systems for the finitely-valued case. What MUltlog currently lacks, however, is the converse: a (proof-theoretic) account of connective meanings in the proof systems it generates. Providing such an account and implementing it into MUltlog not only automates the generation of P-tS for finite-valued logics but also links $n$-valued model-theoretic semantic clauses to their proof-theoretic counterpart.

In the inference-behaviour semantics (I-bS) paradigm of P-tS, these semantic clauses take the shape of substructural operational rules. These are obtained by measuring the minimal \textit{inference behaviour} of a connective \#, i.e. the syntactic occurrences of \#-formulae in the proof of \#'s definability relative to a minimal derivability relation. In contrast to other popular approaches to P-tS such as proof-theoretic validity \cite{prawitz1971ideas, prawitz1974idea, prawitz1973towards,  schroeder2006validity} or base-extension semantics \cite{ gheorghiu2024proof, sandqvist2015base}, I-bS models connective meanings in a calculus-independent way. This allows us to compare the meaning of connectives in different calculi, yielding an insightful proof-theoretic analysis of their semantic interrelations such as between classical negation(s) and implication(s) and their intuitionistic and dual-intuitionistic counterparts \cite{nagler2026inference, nagler2026measuring}.\footnote{Moreover, logic-independent connective meanings are critical for modelling meaning-invariant logical pluralism, and for related questions about logical disagreement \cite[e.g.][]{beall2001defending, beall2000logical, beall2006logical}, a major backdrop to the development of I-bS \cite{dicher2016proof, nagler2026measuring,restall2014pluralism}.}

One of the reasons that makes I-bS particularly suitable for this project is that it generally only designates a limited number of connectives as \textit{meaningful} relative to a fixed minimal derivability relation. For instance, of the $10,816$ operational rules definable in standard 2-dimensional sequent calculi using at most two active formulae and premiss sequents, only $21$ turn out to be meaningful in the default minimal derivability relation only containing \textsc{Id(entity)} over primitives and context-differentiating \textsc{Cut} \cite{nagler2026inference}. This greatly limits the possible output space for a MUltlog-style computational implementation. However, I-bS has only been developed for 2-dimensional sequent calculi thus far.

This paper provides a first case study for extending I-bS to $>2$-dimensional calculi, focussing on two characteristic 3-valued logics: Kleene's Strong Theory \textbf{K3} \cite{kleene1952introduction, kleene1938notation} and Priest's Logic of Paradox \textbf{LP} \cite{priest1979logic}, specifically in the form of MUltlog-generated 3-dimensional sequent calculi \cite{multlog2024lp, multlog2024k3}. Whilst extending the I-bS machinery to an added dimension is relatively straightforward, a more sophisticated solution is needed to preserve its signature feature: the comparison of connective meanings across different calculi. In response, we will propose a notion of meaning-extension across dimensions such that an $m$-dimensional connective extends the meaning of an $n$-dimensional connective (for $m>n$) iff the inference behaviour of the latter is (conservatively) preserved in an $n$-dimensional fragment of the former. 

Given this notion, we find that the connectives in the MUltlog-generated 3-dimensional calculus for \textbf{K3} are meaning-extensions of their counterparts in classical \textbf{LK} \cite{gentzen1935auntersuchungen,gentzen1935buntersuchungen}. Moreover, our results show that the \textbf{K3} connectives are meaning-identical to their \textbf{LP}-counterparts, a result akin to \cite{hjortland2013logical}.

To achieve our goals, we first introduce multi-dimensional sequent calculi---in particular, 3-dimensional \textbf{K3}/\textbf{LP}---in \S2, alongside a new double-negative bilateralist reading for multi-dimensional sequents. We then briefly outline the method of I-bS in \S 3. In \S4, we generate our data for the measurement of inference behaviour in the form of conservativity and uniqueness proofs for 3-dimensional \textbf{K3}/\textbf{LP} relative to a minimal derivability relation of 3-dimensional \textsc{Id} and 2-dimensional \textsc{Cut}. Based on this, we present the semantic clauses for \textbf{K3}/\textbf{LP} in \S5 together with the results that the connective meanings in the two calculi are identical and, moreover, meaning-extensions of \textbf{LK}. In \S6, we evaluate our results with respect to our overarching goal of automating I-bS before taking stock in \S 7.
\section{Object of Inquiry: Multi-Dimensional Sequents}
\begin{definition}[syntax]\

\noindent We operate with two different sentential \textit{languages}:
\begin{enumerate}
    \item object language $\mathcal{L}_\textbf{Cal}:=\big\{p_{i}\mid i\in \mathbb{N}\big\}\cup \big\{-, \wedge, \vee, \supset, ), (\big\}$, and
    \item semantic language $\mathcal{L}^S_\textbf{Cal}:=\big\{p_{i}\mid i\in \mathbb{N}\big\}\cup \big\{ \neg, \sim, \otimes, \sqcap, \oplus, \sqcup, \rightarrow, \rightsquigarrow, ), ( \big\}$.
\end{enumerate}
\textit{Formulae} of $\mathcal{L}_\textbf{Cal} \,\big/\,\mathcal{L}^S_\textbf{Cal}$ are defined recursively:
\begin{enumerate}
    \item Every sentential variable $p$ is a formula and \textit{primitive}.
    \item If $A$ is a formula, so is $\#A$, for $\#\in\{-\}\big/\{\neg,\sim\}.$
    \item If $A$ and $B$ are formulae, so is $A\#B$, for $\#\in\{\wedge, \vee,\supset\}\,\big/\,\{\otimes, \sqcap, \oplus, \sqcup, \rightarrow,\rightsquigarrow\}.$
    \item Nothing else is a formula.
\end{enumerate}
The set $PS(A)$ of \textit{primitive sub-formulae} of $A$ is defined recursively: $PS(p):=\{p\}$; $PS(\#A):=PS(A)$, for $\#\in\{-,\neg,\sim \}$; $PS(A\#B):=PS(A)\cup PS(B)$, for $\#\in\{\wedge, \otimes, \sqcap, \vee, \oplus, \sqcup,\supset, \rightarrow,\rightsquigarrow\}$; $PS(\Gamma)=\bigcup_{A\in \Gamma} PS(A)$.

We follow standard notational conventions where Roman majuscules $A,B,C, \dots$ represent arbitrary formulae and Greek majuscules $\Gamma,\Delta,\Sigma,\dots$ represent arbitrary multisets of formulae. We write $\#, \natural$ for arbitrary connectives, and $\texttt{\#}$ $(\na)$ for arbitrary formulae in which $\#$ $(\natural)$ is the main connective.
\end{definition}

\begin{definition}[$n$-dimensional sequent]\

    \noindent 
    For some finite $n\in \mathbb{N}$, an $n$-dimensional \textit{sequent} $\mathfrak{s}$ is an $n$-tuple
    \[\Gamma_1|\Gamma_2|\Gamma_3|\dots|\Gamma_n\]
    such that $\mathcal{D}=\{d\in \mathbb{N}|1\leq d\leq n\}$ is the set of dimensions of that sequent and $\Gamma_{d\in \mathcal{D}}$ is a finite multiset of formulae. We call $\Gamma_d$ the $d$th \textit{component} of $\mathfrak{s}$.

    We use component-wise sequent notation from \cite{baaz1993dual,baaz1993elimination,baaz1993systematic, zach1993proof}, in the variant of \cite{nagler2026inference}:
    \begin{itemize}
    \singlespacing
    \item For some $\mathcal{D}^\prime\subseteq \mathcal{D}$, $[\mathcal{D}^\prime:\Delta]$ is a sequent such that $\Gamma_d=
    \begin{cases} \Delta \text{ if } d\in \mathcal{D}^{\prime},\\ \emptyset \text{ otherwise}.
    \end{cases}$\doublespacing
    \item Let $s, t$ be sequents whose $d$th component is $\Gamma^s_d$, or $\Gamma^t_d$, respectively. We write $s \times t$ (or: $st$) for the sequent whose $d$th component is $\Gamma^s_d\cup\Gamma^t_d$, for all $d\in \mathcal{D}$. We call $s, t$ \textit{sub-sequents} of any sequent $s \times t$.
    \end{itemize}
As a convention, we drop `\{', `\}' in the sequent environment $[\dots]$. We use Roman minuscules $s,t,r,\dots$ to represent arbitrary sub-sequents.

For example if $\mathcal{D}=\{1,2\}$, $[1:A,B]\times[2:B]=[1:A,B;2:B]$, whose set of all sub-sequents is $\{[\ ],[1:A],[1:B],[2:B],[1:A,B],[1:A;2:B],[1:B;2:B], [1:A,B;2:B]\}$.\footnote{Note that $[\ ]=[1:\emptyset;2:\emptyset], [1:A]=[1:A;2:\emptyset]$, etc.}
\end{definition}
\begin{definition}[rule, calculus]\

    \noindent A (sequent) \textit{rule} indicates permissible transformations from a multiset of sequents, called \textit{premises}, to a \textit{conclusion} (sequent), separated by a horizontal inference line. Formulae only appearing above the line are called \textit{active}, those only below the line \textit{principal} and those on both sides of the line \textit{parametric}. The multiset of all parametric formulae is called the \textit{context} of the rule.

    A (sequent) \textit{calculus} is a set of rules. 
\end{definition}
For example, the classical calculus \textbf{LK} \cite{gentzen1935auntersuchungen,gentzen1935buntersuchungen} is defined on 2-dimensional sequents whereas the Strong Kleene calculus \textbf{K3} \cite{multlog2024k3} is defined on 3-dimensional sequents. Table 1 shows the multiplicative sentential version of \textbf{LK} and Table 2 the multiplicative version of \textbf{K3}. The context restrictions ($^\ast, ^\dag, ^\ddag$) in Table 2 ensure compatibility with our minimal derivability relation. We will come back to this in \S 5.
\begin{table}
\fbox{
\parbox{\linewidth}{
Let $\mathcal{D}=\{1,2\}$ and let $i,j\in \mathcal{D}, i\neq j$. Multiplicative sentential \textbf{LK} consists of the \textit{structural rules}:
\begin{center}
    \begin{tabular}{c c c c}
       \begin{prooftree}
        \hypo{\emptyset}
        \infer1[\textsc{Id}]{[\mathcal{D}:A]}
    \end{prooftree}  & \begin{prooftree}
        \hypo{[i:A]s}
        \hypo{[j:A]t}
        \infer2[\textsc{Cut}]{st}
    \end{prooftree} &
    \begin{prooftree}
        \hypo{s}
        \infer1[$i$\textsc{W}]{[i:A]s}
    \end{prooftree} &
    \begin{prooftree}
        \hypo{[i:A,A]s}
        \infer1[$i$\textsc{C}]{[i:A]s}
    \end{prooftree}\\
         & 
    \end{tabular}   
\end{center}
and of the \textit{operational rules}:
\begin{center}
    \begin{tabular}{c c}
    \begin{prooftree}
        \hypo{[2:A]s}
        \infer1[1-]{[1:-A]s}
    \end{prooftree} 
    &
    \begin{prooftree}
        \hypo{[1:A]s}
        \infer1[2-]{[2:-A]s}
    \end{prooftree} 
    \\ \\
    \begin{prooftree}
        \hypo{[1:A,B]s}
        \infer1[$1\wedge$]{[1:A\wedge B]s}
    \end{prooftree} &
    \begin{prooftree}
        \hypo{[2:A]s}
        \hypo{[2:B]t}
        \infer2[$2\wedge$]{[2:A\wedge B]st}
    \end{prooftree} 
    \\ \\
    \begin{prooftree}
        \hypo{[1:A]s}
        \hypo{[1:B]t}
        \infer2[$1\vee$]{[1:A\vee B]st}
    \end{prooftree}
    &
    \begin{prooftree}
        \hypo{[2:A,B]s}
        \infer1[$2\vee$]{[2:A\vee B]s}
    \end{prooftree}
    \\ \\
    \begin{prooftree}
        \hypo{[2:A]s}
        \hypo{[1:B]t}
        \infer2[$1\supset$]{[1:A\supset B]st}
    \end{prooftree}
    &
    \begin{prooftree}
        \hypo{[1:A;2:B]s}
        \infer1[$2\supset$]{[2:A\supset B]s}
    \end{prooftree}
    \end{tabular}
\end{center}
}}
    \caption{multiplicative sentential \textbf{LK}}
\end{table}

\begin{table}[h]
\fbox{
\parbox{\linewidth}{
Let $\mathcal{D}=\{1,2,3\}$ and let $i,j,k\in \mathcal{D}$ be pairwise distinct. Multiplicative \textbf{K3} consists of the \textit{structural rules}:
\begin{center}
    \begin{tabular}{c c c c}
       \begin{prooftree}
        \hypo{\emptyset}
        \infer1[\textsc{Id}]{[\mathcal{D}:A]}
    \end{prooftree}  & \begin{prooftree}
        \hypo{[i:A]s}
        \hypo{[j:A]t}
        \infer2[\textsc{Cut}]{st}
    \end{prooftree} &
    \begin{prooftree}
        \hypo{s}
        \infer1[$i$\textsc{W}]{[i:A]s}
    \end{prooftree} &
    \begin{prooftree}
        \hypo{[i:A,A]s}
        \infer1[$i$\textsc{C}]{[i:A]s}
    \end{prooftree}\\
         & 
    \end{tabular}   
\end{center}
and of the \textit{operational rules}:
\begin{center}
\footnotesize
    \begin{tabular}{c c c}
    \begin{prooftree}
        \hypo{[3:A]s}
        \infer1[1-]{[1:-A]s}
    \end{prooftree} 
    &
    \begin{prooftree}
        \hypo{[2:A]s}
        \infer1[2-]{[2:-A]s}
    \end{prooftree} 
    &
    \begin{prooftree}
        \hypo{[1:A]s}
        \infer1[3-]{[3:-A]s}
    \end{prooftree} 
    \\ \\
    \begin{prooftree}
        \hypo{[1:A,B]s}
        \infer1[$1\wedge$]{[1:A\wedge B]s}
    \end{prooftree}
    &
    \begin{prooftree}
        \hypo{[2,3:A]s}
        \hypo{[2:A,B]t}
        \hypo{[2,3:B]u}
        \infer3[$2\wedge^\ast$]{[2:A\wedge B]stu}
    \end{prooftree} 
    &
    \begin{prooftree}
        \hypo{[3:A]s}
        \hypo{[3:B]t}
        \infer2[$3\wedge$]{[3:A\wedge B]st}
    \end{prooftree} 
    \\ \\
    \begin{prooftree}
        \hypo{[1:A]s}
        \hypo{[1:B]t}
        \infer2[$1\vee$]{[1:A\vee B]st}
    \end{prooftree}
    &
    \begin{prooftree}
        \hypo{[1,2:A]s}
        \hypo{[2:A,B]t}
        \hypo{[1,2:B]u}
        \infer3[$2\vee^\dag$]{[2:A\vee B]stu}
    \end{prooftree} 
    &
    \begin{prooftree}
        \hypo{[3:A,B]s}
        \infer1[$3\vee$]{[3:A\vee B]s}
    \end{prooftree}
    \\ \\
    \begin{prooftree}
        \hypo{[3:A]s}
        \hypo{[1:B]t}
        \infer2[$1\supset$]{[1:A\supset B]st}
    \end{prooftree}
    &
    \begin{prooftree}
        \hypo{[2,3:A]s}
        \hypo{[2:A,B]t}
        \hypo{[1,2:B]u}
        \infer3[$2\supset^\ddag$]{[2:A\supset B]stu}
    \end{prooftree} 
    &
    \begin{prooftree}
        \hypo{[1:A;3:B]s}
        \infer1[$3\supset$]{[3:A\supset B]s}
    \end{prooftree}
    \end{tabular}
    \end{center}
    \scriptsize
    $^\ast$$s=[1:\Gamma_1;2:\Gamma_2;3:\Gamma_3],t=[1:\Delta_1;2:\Delta_2;3:\Delta_3],u=[1:\Gamma_1, \Delta_1;2:\Sigma_2;3:\Sigma_3]$\\
    $^\dag$$s=[1:\Gamma_1;2:\Gamma_2;3:\Gamma_3],t=[1:\Delta_1;2:\Delta_2;3:\Delta_3],u=[1:\Sigma_1;2:\Sigma_2;3:\Gamma_3, \Delta_3]$\\
    \ddag$s=[1:\Gamma_1;2:\Gamma_2;3:\Gamma_3],t=[1:\Delta_1;2:\Delta_2;3:\Delta_3],u=[1:\Gamma_1,\Delta_1;2:\Sigma_2;3:\Gamma_3, \Delta_3]$
}}
    \caption{multiplicative \textbf{K3}}
\end{table}

 How should we read sequents? We typically think of a standard 2-dimensional sequent $\Gamma_1|\Gamma_2$ as expressing an (inner) derivability or validity relation: `Some formula in $\Gamma_2$ can be derived from all formulae in $\Gamma_1$' ($\bigwedge\limits_{p\in \Gamma_1} p\vdash \bigvee\limits_{q\in \Gamma_2}q$). We call this the \textit{derivability reading} of sequents. In the 2-dimensional setting, this is equivalent to the material conditional $\bigwedge\limits_{p\in \Gamma_1} p\supset \bigvee\limits_{q\in \Gamma_2}q$ and, hence, to  $\bigvee\limits_{p\in \Gamma_1} 
 - p\vee \bigvee\limits_{q\in \Gamma_2}q$, which gives rise to the \textit{truth-functional} reading: `One of the formulae in $\Gamma_1$ is false, or one of the formulae in $\Gamma_2$ is true'. 
 
 However, in a $>2$-dimensional setting, the equivalence between both readings generally breaks down. In response, MUltlog---in line with a well-developed tradition tracing back to \cite{rousseau1967sequents, rousseau1970sequents, schroter1955methoden}---adopts a (positive) truth-functional reading, in which each sequent dimension corresponds to a truth value. According to this reading, a 3-dimensional sequent $\Gamma_1|\Gamma_2|\Gamma_3$ expresses the following 3-valued validity statement, for truth values $1,2,3$: `At least one formula in $\Gamma_1$ is 1, or at least one formula in $\Gamma_2$ is 2, or at least one formula in $\Gamma_3$ is 3'.\footnote{For details and a literature overview and the difference between positive and negative truth-functional readings see, for example, \cite{wintein2016all, zach1993proof}. (2-dimensional) sequent calculi for $>2-$valued logics that are based on the derivability reading include e.g. hypersequent \cite{avron1991natural} or labelled \cite{cobreros2022higher} approaches.} 

As proof-theoretic semanticists, we might have qualms about relying on truth values to make sense of sequents, i.e. the proof-theoretic structures we use to give our semantics in. One can argue that a core motivating feature of P-tS is its ability to model meaning without recourse to notions of truth. We avoid such objections by adopting a \textit{multilateralist} sequent reading. This is an extension of Restall's \textit{bilateralist} reading \cite{restall2005multiple}, according to which the $2$-dimensional sequent $\Gamma_1|\Gamma_2$ corresponds to an incoherent (or: out-of bounds) state, i.e. `It is incoherent to both assert every formula in $\Gamma_1$ and deny everything in $\Gamma_2$'  (-($\bigwedge\limits_{p\in \Gamma_1} p\wedge \bigwedge\limits_{q\in \Gamma_2}-q)$). In lieu of truth values, we operate with the primitive speech acts of \textit{assertion} and \textit{denial}, which constrain (rational) discursive contributions of an agent.\footnote{Alternatively, one can work with \textit{acceptance} and \textit{rejection}, thought of as agents' rational cognitive states, or attitudes.} When adding further sequent dimensions, we, thus, expand our available speech acts accordingly. In the case of \textbf{K3}, we will add the speech act of \textit{suspending judgement}, yielding for the 3-dimensional sequent $\Gamma_1|\Gamma_2|\Gamma_3$: `It is incoherent to assert every formula in $\Gamma_1$, suspend judgement about everything in $\Gamma_2$ whilst denying everything in $\Gamma_3$'.

Recent advocacy for \textit{suspension of judgement} as an independent speech act can be found in \cite{friedman2013suspended, friedman2017suspend, friedman2013rational}. We can think of suspending judgement about something as \textit{refraining from asserting or denying} it \cite{ferrari2022varieties}.\footnote{See also \cite{incurvati2026plea} for a related notion of multilateralism.} Different multi-dimensional sequent calculi might require different speech acts just like different multi-valued logics might come with different truth values. For instance, $3-$dimensional sequents in a calculus for the Logic of Paradox \textbf{LP} (given by the same rules as \textbf{K3}) are best read using the third speech act of \textit{doubling judgement}. We can think of doubling judgment about something as \textit{asserting and denying} it---once again, conceived of as an independent and irreducible speech act. Hence, we obtain for the sequent $\Gamma_1|\Gamma_2|\Gamma_3$: `It is incoherent to assert every formula in $\Gamma_1$, doubling judgement about everything in $\Gamma_2$ whilst denying everything in $\Gamma_3$'.

This multilateral reading helps us make sense of sequents, our semantic base structure, and supplies some philosophical grounding for our approach. It allows us to view sequent rules (and derivations constructed from them) as constraining rational positions. For example, we can read the \textbf{K3} negation rules as `If denying $A$ is incoherent in the context $s$, so is asserting $-A$', `If suspending judgement about $A$ is incoherent in the context $s$, so is suspending judgement about $-A$', and `If asserting $A$ is incoherent in the context $s$, so is denying $-A$', respectively. It is important to note that this multilateralism is not part of our formal semantic theory but a device that helps make sense of the employed formalism (and, as such, replaceable by other readings). However, we can think it as one way of \textit{interpreting} our semantic theory alongside alternative options.

Finally, we employ a standard notion of (vertical) derivations \cite[e.g.][]{restall2023logical}:
\begin{definition}[vertical derivation, proof]\

    \noindent A (vertical) \textit{derivation} (tree) of sequent $\mathfrak{s}$ from the set of sequents $\mathfrak{S}$ in a calculus \textbf{Cal} is a non-downward branching partial order on a finite set of sequents, whose root is $\mathfrak{s}$ and whose leaves are $\emptyset$ or premisses, i.e. some $\mathfrak{s}^\prime \in \mathfrak{S}$, and every step is an application of some \textbf{Cal}-rule.
    
    If $\mathfrak{S}=\emptyset$, we call the derivation a (vertical) \textit{proof} (tree).
\end{definition}
The tree structure constructed from chained applications of rules represents a calculus's \textit{vertical} derivability relation, i.e. the derivability relation between its sequents, making vertical derivability calculus-dependent. In contrast, (vertically provable) sequents correspond to the notion of \textit{validity} encoded the calculus. We also call this the \textit{horizontal derivability relation} of \textbf{Cal}.
\begin{definition}[horizontal derivability, provability]\ 

    \noindent Some formula $A\in \Delta$ is horizontally \textit{derivable} from the multiset of formulae $\Gamma$ in a calculus \textbf{Cal} with the set of dimensions $\mathcal{D}$ iff the sequent $[\mathcal{D}^-:\Gamma;\mathcal{D}^+:\Delta]$ is vertically provable in \textbf{Cal}, where $\mathcal{D}^-$ is the set of \textit{undesignated} dimensions in \textbf{Cal}, $\mathcal{D}^+$ is the set of \textit{designated} dimensions in \textbf{Cal} and $\mathcal{D}=\mathcal{D}^- \cup \mathcal{D}^+$. 

    If $\Gamma=\emptyset$, we say that some $A\in \Delta$ is horizontally \textit{provable} in \textbf{Cal}.
\end{definition}
The designated dimensions of an $n$-dimensional calculus generally correspond to the designated truth values of its corresponding model theory.

\begin{example}\

    \noindent
    \begin{enumerate}
        \item \textbf{LK} has the designated value $\mathcal{D}^+=\{2\}$. Hence, $A$ is horizontally derivable from $\Gamma$ in \textbf{LK} iff $[1:\Gamma;2:A]$ is vertically provable in \textbf{LK}. Moreover, $A$ is \textbf{LK}-provable iff $[2:A]$ is vertically provable in \textbf{LK}. Using our multilateralism, we can understand this to mean that denying $A$ is incoherent---$A$ is \textit{undeniable}.
        \item \textbf{K3} has the designated value $\mathcal{D}^+=\{3\}$. Hence, some $A\in \Delta$ is horizontally derivable from $\Gamma$ in \textbf{K3} iff $[1,2:\Gamma;3:\Delta]$ is vertically provable in \textbf{K3}. Moreover, some $A \in \Delta$ is \textbf{K3}-provable iff $[3:\Delta]$ is vertically provable in \textbf{K3}. 
        \item \textbf{LP} has the designated values $\mathcal{D}^+=\{2,3\}$. Hence, $A$ or $B$ is horizontally derivable from $\Gamma$ in \textbf{LP} iff $[1:\Gamma;2:A,B;3:A,B]$ is vertically provable in \textbf{LP}. Moreover, $A$ or $B$ is horizontally \textbf{LP}-provable iff $[2,3:A,B]$ is vertically provable in \textbf{LP}. 
    \end{enumerate}
\end{example}
\section{Methodology: Inference-Behaviour Semantics}
Given our formal apparatus, let us now briefly introduce our semantic toolbox: inference-behaviour semantics (I-bS). We will limit ourselves to presenting the key ideas; technical details can be found in \cite{nagler2026inference} and philosophical motivations in \cite{nagler2026measuring}.

I-bS---and P-tS in general---is an \textit{inferentialist} approach to semantics, according to which the meaning of a connective is fixed by its use in reasoning \cite{brandom1994making, incurvati2023reasoning}. It is a \textit{formal} approach insofar as it views logical proofs as representing such reasoning. What makes I-bS distinct from other forms of P-tS is the kind of connective use it investigates. \textit{Local} approaches such as \cite{ferrari2021proof, hjortland2013logical, restall2014pluralism} aim to find the meaning of a connective in constituents of their operational rules alone. However, by merely considering rules but not their application instances in proofs they fail to consider features of a connective's meaning that only become visible on the level of derivability such as structural co-determination effects \cite{dicher2016proof, nagler2026measuring}. In contrast, \textit{global} approaches such as proof-theoretic validity \cite{prawitz1971ideas,prawitz1974idea, prawitz1973towards}, base-extension semantics \cite{sandqvist2015base} or earlier substructural approaches \cite{villalonga2020substructural} investigate connective use in full logical calculi. As a result, connective use becomes inseparably interwoven with the notion of derivability of a specific logic. Hence, one cannot compare connective meanings across different logics, limiting the explanatory power of such approaches and rendering them unsuitable, for instance, for modelling meaning-invariance relations amongst logics such as in frameworks for logical pluralism \cite[e.g.][]{ferrari2021proof, restall2014pluralism}.

I-bS aims to strike a balance between such local and global approaches. In \cite{nagler2026measuring}, we dubbed such an approach \textit{regional}, drawing inspiration from \cite{cobreros2011supervaluationism}. Whilst I-bS investigates connective use in actual proofs---not merely operational rules---it limits itself to proofs that are \textit{characteristic} for the use of the connective. Specifically, I-bS focusses on the use that occurs in proofs of the connective's \textit{definability} relative to a substructural \textit{minimal derivability relation}. The aim is to make the connective's use visible whilst minimising confounding structural input that would result from using a full-strength derivability relation instead. Whilst there is no unique \textit{minimal derivability relation}, its choice is limited by pragmatic desiderata such as the compatibility with the derivability relations of the logics under investigation or  comparability with previous results.

Following the reasoning in \cite{nagler2026measuring}, we will focus on the derivability relation consisting of rules for (context-differentiating) \textsc{Cut} and \textsc{Id}. Whilst our 3-dimensional setting allows for various formulations of \textsc{Cut} and \textsc{Id} rules, including 2-dimensional \textsc{Id$^2$} and 3-dimensional \textsc{Cut$^3$},
\[
\begin{prooftree}
    \hypo{\emptyset}
    \infer1[\textsc{Id}$^2$]{[i,j:A]}
\end{prooftree}\qquad
\begin{prooftree}
    \hypo{[1:A]s}
    \hypo{[2:A]t}
    \hypo{[3:A]u}
    \infer3[\textsc{Cut}$^3$]{stu}
\end{prooftree},
\]
we will stick to the standard combination of 3-dimensional \textsc{Id$^3$} and 2-dimensional \textsc{Cut$^2$}: 
\[
\begin{prooftree}
        \hypo{\emptyset}
        \infer1[\textsc{Id$^3$}]{[\mathcal{D}:A]}
    \end{prooftree}  \qquad\begin{prooftree}
        \hypo{[i:A]s}
        \hypo{[j:A]t}
        \infer2[\textsc{Cut$^2$}]{st}
    \end{prooftree},
\]
for $\mathcal{D}=\{1,2,3\}, i,j\in \mathcal{D}, i\neq j$. Our reasons for this include that \textsc{Id$^3$} and \textsc{Cut$^2$} are the default versions, including for use in MUltlog \cite{multlog2024lp, multlog2024k3}, and the different $3$-dimensional formulations are not generally equivalent (e.g. $\textsc{Id}^3+\textsc{Cut}^2\vdash \textsc{Cut}^3$ and $\textsc{Id}^2+\textsc{Cut}^3\vdash \textsc{Id}^3$, but $ \textsc{Id}^3+\textsc{Cut}^2\not{\vdash} \textsc{Id}^2$ and $\textsc{Id}^2+\textsc{Cut}^3\not{\vdash} \textsc{Cut}^2$).\footnote{In the case of mixed definitions, $\textsc{Id}^3+\textsc{Cut}^3\not{\vdash} \textsc{Cut}^2$ and $\textsc{Id}^3+\textsc{Cut}^3\not{\vdash} \textsc{Id}^2$. Notably, $\textsc{Id}^2+ \textsc{Cut}^2$ trivialises any $>2$-dimensional sequent calculus in the presence of contraction.}

For our notion of \textit{definability}, we use a Belnap-style definition \cite{belnap1962tonk}:
\begin{definition}[definability]\

    \noindent
    Let $\mathfrak{R}_\#$ be the set of operational rules for the connective \#, and let \textbf{Cal} be a sequent calculus. $\mathfrak{R}_\#$ and \textbf{Cal} are defined for all and only the dimensions in $\mathcal{D}$.

    \# is \textit{definable} in \textbf{Cal} iff
    \begin{enumerate}
        \item \# \textit{conservatively extends} \textbf{Cal}, i.e. all sequents provable in $\textbf{Cal}\cup \mathfrak{R}_\#$ contain `\#', unless they are also provable in \textbf{Cal} (\textit{conservativity}), and
        \item \#-formulae and $\natural$-formulae are (horizontally) \textit{inter-derivable} in $\textbf{Cal}\cup \mathfrak{R}_\#\cup\mathfrak{R}_\natural$, where $\mathfrak{R}_\#=\mathfrak{R}_\natural$ modulo substituting `$\natural$' for `\#', i.e. for any \#-formula $\texttt{\#}$ and each $\natural$-formula $\na$, $[\mathcal{D}^-:\texttt{\#};\mathcal{D}^+:\na]$ is provable in $\textbf{Cal}\cup \mathfrak{R}_\#\cup\mathfrak{R}_\natural$  (\textit{uniqueness}).
    \end{enumerate}
\end{definition}
Intuitively, \textit{conservativity} ensures that \# has an inferential role---and, hence, semantic content---which is distinct from that of other connectives. Meanwhile, \textit{uniqueness} certifies that this inferential role is \textit{unique} up to isomorphism.

When introducing I-bS, we defined uniqueness in terms of vertical rather than horizontal interderivability \cite{nagler2026measuring}. In \cite{nagler2026inference}, we weighed the two options against each other, coming to no conclusive decision in favour of either. In this paper, we select the \textit{horizontal} option for two reasons: firstly, previous work using vertical inter-derivability focussed on 2-dimensional calculi. In the 2-dimensional case, vertical inter-derivability---as fixed by a calculus---and horizontal inter-derivability stand in a 1:1 relation since both are binary relations.\footnote{Technically, there are \textit{four} horizontal derivability relations for each 2-dimensional calculus. However, two of these are trivial and the other two are identical up to converse. For instance, for the calculus of \textbf{LK}, we also get calculi with $\mathcal{D}^+=\{1,2\}$ and $\mathcal{D}^+=\emptyset$---the trivial options for which all sequents are valid or invalid, respectively---and the one with $\mathcal{D}^+=\{1\}$ where each horizontal validity $\Gamma \vdash \Delta$ corresponds to the \textbf{LK}-validity $\Delta \vdash \Gamma$ (the latter for $\mathcal{D}^+=\{2\}$).}  However, this correspondence breaks down in the $> 2$-dimensional case: one vertical derivability relation can correspond to multiple horizontal derivability relations. For instance, the vertical derivability relation defined by the rules in Table 2 corresponds to the horizontal derivability relations of both \textbf{K3} and \textbf{LP} and is only distinguished by the choice of designated dimension (cf. Example 1, \S2). Whilst the vertical derivability relation is still binary, sequents are \textit{ternary} and only collapsed into a binary horizontal derivability relation by fiat.

Secondly and relatedly, we typically prove vertical inter-derivability in terms of horizontal inter-derivability. In the $2$-dimensional case, this means first establishing $[i:\texttt{\#};j:\na]$ ($i\neq j$) and then using \textsc{Cut} to derive $[k:\texttt{\#}]s$ from $[l:\texttt{\na}]s$ ($k\neq l$), and vice versa. In the $\geq 2$-dimensional case, this strategy is not available in our minimal derivability relation alone but requires further structural resources such as the presence of \textsc{Contraction}. Presupposing these counteracts the minimalist project of I-bS \textit{ex ante}.

We now have the context within which we measure the use and, hence, meaning of connectives: Belnap-style definability proofs relative to a minimal derivability relation of $\textsc{Id}^3, \textsc{Cut}^2$.\footnote{For the remainder of this paper, we will drop the superscripts.} We now introduce the centrepiece of I-bS: our metric for connective use, \textit{inference behaviour}.

\begin{definition}[inference behaviour $\texttt{ib}$]\

    \noindent 
    Let \textbf{Cal} be a calculus with the set of dimensions $\mathcal{D}$. Let $\#$ be a fixed connective and let $\mathfrak{p}$ be a list of non-empty tree nodes (\textit{vertical locations} $V$) forming a \textbf{Cal}-proof tree. Let $A^{\#}$ be a $\#$-formula, or the active formula of some (later) application of a $\#$-rule.

    The \textit{inference behaviour} \texttt{ib} of $\#$ at a vertical location $V$ is the function
    \[
    \texttt{ib}(\#,V) := \langle H(\#,V), R(\#,V)\rangle\text{,  such that:}
    \]
    \begin{itemize}
        \item the \textit{horizontal-location} function $H$ maps each pair $\langle\#,V\rangle$ to a multiset $\mathcal{D}'$ over $\mathcal{D}$, where each $d \in \mathcal{D}'$ represents the dimension in which $A^{\#}$ occurs, for all $A^{\#}$ occurring in the sequent at $V$;
        \item the \textit{rule-application} function $R$ maps each pair $\langle \#,V\rangle$ to an element of \textbf{Cal}$\cup\{\texttt{none}\}$, and $R(\#,V)$ is the rule applied at the inference line immediately above $V$, whenever some $A^{\#}$ is a principal formula of $R(\#,V)$, and \texttt{none} otherwise.
    \end{itemize}
    The \textit{inference behaviour} $\texttt{ib}^\#_{\mathfrak{p}}$ of $\#$ in the proof $\mathfrak{p}$ is the list of values $\texttt{ib}(\#,V)$ for each $V \in \mathfrak{p}$, namely $\texttt{ib}^\#_{\mathfrak{p}}= \big\langle \langle H(\#,V), R(\#,V)\rangle \big\rangle_{V \in \mathfrak{p}}$.

The \textit{inference-behaviour profile} of $\#$ in $\mathfrak{p}$ is
  $\texttt{ibp}^\#_{\mathfrak{p}}
  = \{ H(\#,V) \mid V \in \mathfrak{p} \}
    \cup
    \{ R(\#,V) \mid V \in \mathfrak{p} \}$.
\end{definition}
Intuitively, $\mathtt{ib}^\#_\mathfrak{p}$ tells us \textit{how} \# is used ($H$) and \textit{why} it is used ($R$) \textit{whenever} it is used ($V$) in $\mathfrak{p}$. $V$ book-keeps the proof steps in which \# is used. $H$ keeps track of the dimension(s) in which the \#-usage occurs ($\emptyset$ if there is no \# usage). Using our trilateral sequent reading, we can think of this as the speech act under which \# is considered. $R$ records the rule used to justify the \#-use (\texttt{none} if there is no \# use).

\begin{example}\

     \noindent Let $\mathfrak{p}$ correspond to the following \textbf{K3} proof:
     \[
     \begin{prooftree}
         \hypo{\emptyset}
         \infer1[\textsc{Id}]{[1:A;2:A;3:A]^2}
         \hypo{\emptyset}
         \infer1[\textsc{Id}]{[1:A;2:A;3:A]^1}
         \infer1[$2W$]{[1:A;2:A,B;3:A]^3}
         \hypo{\emptyset}
         \infer1[\textsc{Id}]{[1:B;2:B;3:B]^4}
         \infer3[$2\supset$]{[1:A,A;2:A\supset B;3:A,B]^5}
         \infer1[$1C$]{[1:A;2:A\supset B;3:A,B]^6}
         \infer1[$2\vee $]{[1:A;2:A\supset B;3:A\vee B]^7}
     \end{prooftree}
     \]
     We order vertical locations starting at the top left and moving in English reading direction left-to-right, top-to-bottom (cf. the displayed superscripts). We use `$\{\dots\}$' for sets, `$[\dots]$' for multisets, and `$\langle \dots \rangle$' for lists.
     
     By tracking the occurrences of $\supset$-formulae and of their active formulae across the 7 proof steps, we get (from location 1 to 7): 
     \[
     \mathtt{ib}^\supset_\mathfrak{p}=\langle\langle[2], \textsc{Id}\rangle,\langle[2,3], \textsc{Id}\rangle,\langle[2,2],2W\rangle,\langle[1,2],\textsc{Id}\rangle,\langle[2], 2\supset \rangle,\langle[2],\texttt{none}\rangle,\langle[2], \texttt{none}\rangle\rangle.
     \]
     $\mathtt{ibp}^\supset_\mathfrak{p}=\{[2],[2,3],[2,2],[1,2],\textsc{Id}, 2W,2\supset ,\texttt{none}\}$ keeps track of the types of horizontal locations and rules recorded in $\mathtt{ib}^\supset_\mathfrak{p}$.
     
     One can visualise $\mathtt{ib}^\supset_\mathfrak{p}$ intuitively by boxing the tracked formulae and rules:
    \[
     \begin{prooftree}
        \hypo{\emptyset}
         \infer1[\boxed{\textsc{Id}}]{[1:A;2:\boxed{A};3:\boxed{A}]^2}
         \hypo{\emptyset}
         \infer1[\boxed{\textsc{Id}}]{[1:A;2:\boxed{A};3:A]^1}
         \infer1[$\boxed{2W}$]{[1:A;2:\boxed{A,B};3:A]^3}
         \hypo{\emptyset}
         \infer1[\boxed{\textsc{Id}}]{[1:\boxed{B};2:\boxed{B};3:B]^4}
         \infer3[$\boxed{2\supset}$]{[1:A,A;2:\boxed{A\supset B};3:A,B]^5}
         \infer1[$1C$]{[1:A;2:\boxed{A\supset B};3:A,B]^6}
         \infer1[$2\vee$]{[1:A;2:\boxed{A\supset B};3:A\vee B]^7}
     \end{prooftree}
     \]
     If we do the same for $\mathtt{ib}^\vee_\mathfrak{p}$, we get 
          \[
     \begin{prooftree}
         \hypo{\emptyset}
         \infer1[\textsc{Id}]{[1:A;2:A;3:A]^2}
         \hypo{\emptyset}
         \infer1[\boxed{\textsc{Id}}]{[1:A;2:A;3:\boxed{A}]^1}
         \infer1[$2W$]{[1:A;2:A,B;3:\boxed{A}]^3}
         \hypo{\emptyset}
         \infer1[\boxed{\textsc{Id}}]{[1:B;2:B;3:\boxed{B}]^4}
         \infer3[$2\supset$]{[1:A,A;2:A\supset B;3:\boxed{A,B}]^5}
         \infer1[$1C$]{[1:A;2:A\supset B;3:\boxed{A,B}]^6}
         \infer1[$\boxed{2\vee}$]{[1:A;2:A\supset B;3:\boxed{A\vee B}]^7}
     \end{prooftree}
     \]
     and, hence, $\mathtt{ib}^\vee_\mathfrak{p}=\langle\langle [3], \textsc{Id}\rangle,\langle\emptyset, \texttt{none}\rangle, \langle[3],\texttt{none}\rangle,\langle[3],\textsc{Id}\rangle,\langle[3,3],\texttt{none}\rangle,\langle[3,3],\texttt{none}\rangle,$\linebreak $\langle[3], 2\vee\rangle\rangle$, and $\mathtt{ibp}^\vee_\mathfrak{p}=\{ \emptyset, [3],[3,3], \textsc{Id}, 2\vee, \texttt{none}\}$.
\end{example}
Putting together our three building blocks, we can now give the I-bS definition for meaning identity:
\begin{definition}[identity of connective meanings]\

    \noindent
    Two connectives have the same \textit{meaning} iff they have the same \textit{inference behaviour} in the proof of their \textit{definability} relative to a fixed \textit{minimal derivability relation} (up to substitution of connective symbols).
\end{definition}
In our concrete case, we can give the meaning of a connective by proving its \textit{conservativity} and \textit{uniqueness} over $\{\textsc{Id}, \textsc{Cut}\}$ and measuring its \textit{inference behaviour}. We can give this meaning in the form of those substructural operational rules that accommodate all and only the structural properties (relational and set-theoretic properties of the derivability relation $\vdash$) tracked in its \textit{inference behaviour} in these proofs. These rules are called the \textit{semantic clause} of the connective.\footnote{See \cite{nagler2026inference} for more details on semantic clauses in I-bS.}

For example, we can see the semantic clauses for all connectives of sentential \textbf{LK} in Table 3 as proven in \cite{nagler2026inference, nagler2026measuring}.\footnote{Despite the calculus-like presentation, one should think of the semantic clause for each distinct connective as being \textit{partitioned off} from the others and not interacting \cite[see][]{nagler2026inference}.}
\begin{table}[h]
\fbox{
\parbox{\linewidth}{
    \centering
    \footnotesize
    \begin{center}
    \begin{tabular}{c c}
    \begin{prooftree}
        \hypo{[1:C;2:A]}
        \infer1[1$\neg$]{[1:\neg A, C]}
    \end{prooftree} 
    &
    \begin{prooftree}
        \hypo{[1:A,C]}
        \infer1[2$\neg$]{[1:C;2:\neg A]}
    \end{prooftree} 
    \\ \\
        \begin{prooftree}
        \hypo{[2:A, C]}
        \infer1[1$\sim$]{[1:\ \sim A; 2: C]}
    \end{prooftree} 
    &
    \begin{prooftree}
        \hypo{[1:A;2:C]}
        \infer1[2$\sim$]{[2:\ \sim A, C]}
    \end{prooftree} 
    \\ \\
    \begin{prooftree}
        \hypo{[1:A,B; 2:C]}
        \infer1[$1\otimes$]{[1:A\otimes B; 2:C]}
    \end{prooftree} &
    \begin{prooftree}
        \hypo{[1:C;2:A]}
        \hypo{[1:D;2:B]}
        \infer2[$2\otimes$]{[1: C,D; 2:A\otimes B]}
    \end{prooftree} 
    \\ \\
    \begin{prooftree}
        \hypo{[1:A; 2:C]}
        \infer1[$1\sqcap$1]{[1:A\sqcap B; 2:C]}
    \end{prooftree} \ 
    \begin{prooftree}
        \hypo{[1:B; 2:C]}
        \infer1[$1\sqcap$2]{[1:A\sqcap B; 2:C]}
    \end{prooftree}&
    \begin{prooftree}
        \hypo{[1:C;2:A]}
        \hypo{[1:C;2:B]}
        \infer2[$2\sqcap$]{[1: C; 2:A\sqcap B]}
    \end{prooftree} 
    \\ \\
    \begin{prooftree}
        \hypo{[1:A;2:C]}
        \hypo{[1:B;2:D]}
        \infer2[$1\oplus$]{[1:A\oplus B;2:C,D]}
    \end{prooftree}
    &
    \begin{prooftree}
        \hypo{[1:C;2:A,B]}
        \infer1[$2\oplus$]{[1:C; 2:A\oplus B]}
    \end{prooftree}
    \\ \\
    \begin{prooftree}
        \hypo{[1:A;2:C]}
        \hypo{[1:B;2:C]}
        \infer2[$1\sqcup$]{[1:A\sqcup B;2:C]}
    \end{prooftree}
    &
    \begin{prooftree}
        \hypo{[1:C;2:A]}
        \infer1[$2\sqcup$1]{[1:C; 2:A\sqcup B]}
    \end{prooftree} \
    \begin{prooftree}
        \hypo{[1:C;2:B]}
        \infer1[$2\sqcup$2]{[1:C; 2:A\sqcup B]}
    \end{prooftree}
    \\ \\
    \begin{prooftree}
        \hypo{[1:C;2:A]}
        \hypo{[1:B;2:D]}
        \infer2[1$\rightarrow$]{[1:A\rightarrow B, C;2:D]}
    \end{prooftree}
    &
    \begin{prooftree}
        \hypo{[1:A,C;2:B]}
        \infer1[2$\rightarrow$]{[1:C;2:A\rightarrow B]}
    \end{prooftree}
    \\ \\
       \begin{prooftree}
        \hypo{[1:C;2:A]}
        \hypo{[1:B,C]}
        \infer2[1$\rightsquigarrow$]{[1:A\rightsquigarrow B, C]}
    \end{prooftree}
    &
    \begin{prooftree}
        \hypo{[1:A,C]}
        \infer1[2$\rightsquigarrow$1]{[1:C;2:A\rightsquigarrow B]}
    \end{prooftree} \
        \begin{prooftree}
        \hypo{[1:B,C]}
        \infer1[2$\rightsquigarrow$2]{[1:C;2:A\rightsquigarrow B]}
    \end{prooftree}
    \end{tabular}
\end{center}}}
    \caption{semantic clauses, sentential \textbf{LK}}
    \label{tab:placeholder}
\end{table}
Notably, there are (at least) twice as many semantic clauses and, thus, meaningful connectives as one typically gives operational rules for. As readers are likely aware, the rules of \textbf{LK} can be presented in two equivalent versions: \textit{multiplicative} and \textit{additive}. We presented the former in Table 1; the latter can be obtained from the former by using the same context all sequents in 2-premiss rules, or by splitting 1-premiss rules with 2 different active formulae into two identical rules, except that they only contain one of the active formulae. For instance, the additive version for $\supset$ is 
\[
     \begin{prooftree}
        \hypo{[2:A]s}
        \hypo{[1:B]s}
        \infer2[1$\supset^{add}$]{[1:A\supset^{add} B]s}
    \end{prooftree}
    \qquad
    \begin{prooftree}
        \hypo{[1:A]s}
        \infer1[2$\supset^{add}$1]{[2:A\supset^{add} B]s}
    \end{prooftree} \
        \begin{prooftree}
        \hypo{[1:B]s}
        \infer1[2$\supset^{add}$2]{[2:A\supset^{add} B]s}
    \end{prooftree}
\]
Crucially, multiplicative and additive versions of the same \textbf{LK}-connective have different semantic clauses---and, thus, \textit{meanings}---in I-bS. Subsequently, I-bS explains their equivalence in classical logic \textbf{LK} (`confounding') not as a semantic property of the connective but as a feature of the (fully structural) derivability relation of \textbf{LK}. In contrast, only the (sub-)structural features encoded in the semantic clauses (e.g. the cardinality of each sequent component) are part of the meaning of the connective. One can think of the I-bS semantic clauses as the semantic core of the multiplicative (cf. $\otimes, \oplus,\rightarrow$) and additive (cf. $\sqcap, \sqcup, \rightsquigarrow$) versions of the \textbf{LK} connective rules. An equivalent split can be found for negation ($\neg$ vs. $\sim$).\footnote{For details, see \cite{nagler2026inference}.}

The features of this semantic analysis become especially salient when comparing connective meanings across different calculi. In \cite{nagler2026inference, nagler2026measuring}, we have provided I-bS for intuitionistic, dual-intuitionistic, lattice, minimal, relevant and linear logics, yielding, for example, the following semantic insights: `disjunction' in \textbf{LK} corresponds to two separate meaningful connectives, $\oplus$ and $\sqcup$, conflated by the derivability relation of \textbf{LK}. The same connectives can be found in additive-multiplicative linear logic, albeit differentiated on a calculus-level here. Hence, `or' means the same in linear and classical logic; the use differences in linear and classical reasoning, respectively, are due to the derivability relation, not the meaning of `$\oplus$' or `$\sqcup$'. In contrast, when giving I-bS for intuitionistic \textbf{LJ}, we only obtain one semantic clause for $\sqcup$. Hence, disjunction in intuitionistic and that in classical logic have different (albeit compatible) meanings.

\begin{table}[h]
\fbox{
\parbox{\linewidth}{
\begin{center}
\small
    \begin{tabular}{c c c}
    \begin{prooftree}
        \hypo{[3:A]s}
        \infer1[1-]{[1:-A]s}
    \end{prooftree} 
    &
    \begin{prooftree}
        \hypo{[2:A]s}
        \infer1[2-]{[2:-A]s}
    \end{prooftree} 
    &
    \begin{prooftree}
        \hypo{[1:A]s}
        \infer1[3-]{[3:-A]s}
    \end{prooftree} 
    \\ \\
    \begin{prooftree}
        \hypo{[1:A]s}
        \infer1[$1\wedge$1]{[1:A\wedge B]s}
    \end{prooftree}\ \begin{prooftree}
        \hypo{[1:B]s}
        \infer1[$1\wedge$2]{[1:A\wedge B]s}
    \end{prooftree}
    &
    \multicolumn{2}{c}{
    \begin{prooftree}
        \hypo{[2,3:A]s^\prime}
        \hypo{[2:A]s}
        \hypo{[2,3:B]s^\prime}
        \infer3[$2\wedge1^\ast$]{[2:A\wedge B]s}
    \end{prooftree}}
    \\ \\
    \multicolumn{2}{c}{
    \begin{prooftree}
        \hypo{[2,3:A]s^\prime}
        \hypo{[2:B]s}
        \hypo{[2,3:B]s^\prime}
        \infer3[$2\wedge2^\ast$]{[2:A\wedge B]s}
    \end{prooftree}}
    &
    \begin{prooftree}
        \hypo{[3:A]s}
        \hypo{[3:B]s}
        \infer2[$3\wedge$]{[3:A\wedge B]s}
    \end{prooftree} 
    \\ \\
    \begin{prooftree}
        \hypo{[1:A]s}
        \hypo{[1:B]s}
        \infer2[$1\vee$]{[1:A\vee B]s}
    \end{prooftree}
    &
    \multicolumn{2}{c}{
    \begin{prooftree}
        \hypo{[1,2:A]s^\prime}
        \hypo{[2:A]s}
        \hypo{[1,2:B]s^\prime}
        \infer3[$2\vee1^\dag$]{[2:A\vee B]s}
    \end{prooftree}}
    \\ \\
    \multicolumn{2}{c}{
    \begin{prooftree}
        \hypo{[1,2:A]s^\prime}
        \hypo{[2:B]s}
        \hypo{[1,2:B]s^\prime}
        \infer3[$2\vee2^\dag$]{[2:A\vee B]s}
    \end{prooftree}}
    &
    \begin{prooftree}
        \hypo{[3:A]s}
        \infer1[$3\vee$1]{[3:A\vee B]s}
    \end{prooftree}\ 
    \begin{prooftree}
        \hypo{[3:B]s}
        \infer1[$3\vee$2]{[3:A\vee B]s}
    \end{prooftree}
    \\ \\
    \begin{prooftree}
        \hypo{[3:A]s}
        \hypo{[1:B]s}
        \infer2[$1\supset$]{[1:A\supset B]s}
    \end{prooftree}
    &
    \multicolumn{2}{c}{
    \begin{prooftree}
        \hypo{[2,3:A]s^\prime}
        \hypo{[2:A]s}
        \hypo{[1,2:B]s^\prime}
        \infer3[$2\supset1^\ddag$]{[2:A\supset B]s}
    \end{prooftree}}
    \\ \\
    \multicolumn{2}{c}{
    \begin{prooftree}
        \hypo{[2,3:A]s^\prime}
        \hypo{[2:B]s}
        \hypo{[1,2:B]s^\prime}
        \infer3[$2\supset2^\ddag$]{[2:A\supset B]s}
    \end{prooftree}}
    &
    \begin{prooftree}
        \hypo{[1:A]s}
        \infer1[$3\supset$1]{[3:A\supset B]s}
    \end{prooftree}\
    \begin{prooftree}
        \hypo{[3:B]s}
        \infer1[$3\supset$2]{[3:A\supset B]s}
    \end{prooftree}
    \end{tabular}
\end{center}
\scriptsize
$^\ast s=[1:\Gamma_1;2:\Gamma_2;3:\Gamma_3],s^\prime=[1:\Gamma_1;2:\Gamma_2]$\\
$^\dag s=[1:\Gamma_1;2:\Gamma_2;3:\Gamma_3],s^\prime=[2:\Gamma_2;3:\Gamma_3]$\\
$^\ddag s=[1:\Gamma_1;2:\Gamma_2;3:\Gamma_3],s^\prime=[2:\Gamma_2]$
}}
    \caption{additive version, operational \textbf{K3} rules}
\end{table}

\textbf{K3} can be defined in both a multiplicative (Table 2) and an additive guise (Table 4).\footnote{Again, we require context restrictions ($^\ast, ^\dag, ^\ddag$) for compatibility with our minimal derivability relation. We will come back to this in \S 5.} Their equivalence is equally trivial in the presence of \textsc{$i$C} and \textsc{$i$W} rules. Just as in case of \textbf{LK}, we will see that I-bS classifies each version of each connective as semantically distinct, considering their equivalence a feature of \textbf{K3}'s derivability relation rather than its semantics. However, this analogy between \textbf{K3} and \textbf{LK} naturally breaks down when considering the added third dimension.

Here lies the main challenge of this paper: if we want to automatedly generate I-bS for any $n$-dimensional logic, we need to find a way to preserve its core motivating feature of inter-calculus semantic comparisons across different dimensions. I-bS has only ever been applied to 2-dimensional sequent calculi, generating 2-dimensional semantic clauses. However, an $n$-dimensional logic naturally has $n$-dimensional semantic clauses. Semantic clauses of different dimensionality cannot possibly yield the same minimal inference behaviour, the I-bS criterion for meaning identity. Typically, the identity of minimal inference behaviour would result in the identity of semantic clauses. Hence, the goal is to find a comparison device that is weaker than semantic clause identity without collapsing into total incomparability across dimensions. 

Our solution is the new concept of a \textit{meaning-extension}. In essence, a higher-dimensional connective extends the meaning of a lower-dimensional connective if the semantic clause of the former contains the semantic clause of the latter as a sub-sequent structure. More precisely:
\begin{definition}[meaning-extension]\

    \noindent Let \# be an $m$-dimensional connective, and let $\natural$ be an $n$-dimensional connective, for some finite $m,n\in \mathbb{N},m>n$. Hence, the operational rules of \# are specified for the set of dimensions $\mathcal{D}_m=\{1,2,\dots, m\}$, and those of $\natural$ for the set of dimensions $\mathcal{D}_n=\{1,2,\dots, n\}$.

    We say that \# \textit{extends the meaning} of $\natural$ iff there is a subset of dimensions $\mathcal{D}_o=\{j,k,\dots,o\}$  with $\mathcal{D}_o\subset \mathcal{D}_m$ and $|\mathcal{D}_o|=|\mathcal{D}_n|$, and there is an $n$-dimensional connective $\#^\prime$ such that
    \begin{enumerate}
        \item for each dimension $i \in \mathcal{D}_o$, the operational rule of the semantic clause for $\#^\prime$, i$\#^\prime$, is identical to the operational rule of the semantic clause for $\#$, i$\#$, except that each sequent $[\Gamma_j|\Gamma_k|\dots|\Gamma_o]$ in i$\#^\prime$ is an $n$-dimensional sub-sequent of the corresponding $m$-dimensional sequent $[\Gamma^\prime_1|\Gamma^\prime_2|\dots|\Gamma^\prime_m]$ in i$\#$ where $\Gamma_d=\Gamma_d^\prime$, for each $d\in \mathcal{D}_o$, and
        \item $\#^\prime$ has the same meaning as $\natural$.
    \end{enumerate}
\end{definition}
We can think of meaning-extensions as applying Belnap-style \textit{conservativity} at the level of dimensions: a higher-dimensional connective extends the meaning of its lower-dimensional counterpart if after adding the higher dimension(s)---i.e. extending existing rules with the new dimension(s) and adding connective rules for the added dimensions---no new connective occurrences are provable, except in the added dimension(s). Put in terms of our multilateralist sequent reading, adding new speech acts (e.g. suspending/doubling judgement) to the discursive space does not change the existing speech acts (e.g. asserting/denying) we perform with our connectives but only adds a new discursive context to consider.

We will find that the \textbf{LK}-semantic clauses extend the meaning of their \textbf{K3} counterparts as the former are contained in the first and third dimension of the latter. To be able to see this result, we first give I-bS for \textbf{K3}.
\section{Data Generation: Uniqueness and Conservativity}
We first prove uniqueness and conservativity (Definition 6) for the semantic clauses of additive and multiplicative \textbf{K3} over our minimal derivability relation, $\{\textsc{Id}, \textsc{Cut}\}$. We then measure the inference behaviour in these definability proofs to obtain the semantic clauses of \textbf{K3}.

\begin{theorem}[uniqueness]\

    \noindent For each multiplicative and additive \textbf{K3} connective \# with operational rules $\mathfrak{R}_\#$, \#-formulae and $\#^{\prime}$-formulae are \textit{inter-derivable} in $\{\textsc{Id}, \textsc{Cut}\}\cup \mathfrak{R}_\#\cup\mathfrak{R}_{\#^{\prime}}$, where $\mathfrak{R}_\#=\mathfrak{R}_{\#^{\prime}}$ modulo substituting `$\#^{\prime}$' for `\#'.
\end{theorem}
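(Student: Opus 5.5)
The plan is to prove the statement connective by connective, and for each connective separately for its additive and its multiplicative version. Fix $\mathcal{D}^-=\{1,2\}$ and $\mathcal{D}^+=\{3\}$, in line with Example 1. Then horizontal inter-derivability amounts to exhibiting, in $\{\textsc{Id},\textsc{Cut}\}\cup\mathfrak{R}_\#\cup\mathfrak{R}_{\#'}$, vertical proofs of
\[
[1,2:\texttt{\#};3:\texttt{\#}^{\prime}] \quad\text{and}\quad [1,2:\texttt{\#}^{\prime};3:\texttt{\#}],
\]
where $\texttt{\#}'$ is the same formula as $\texttt{\#}$ with $\#'$ in place of $\#$. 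Since $\mathfrak{R}_\#$ and $\mathfrak{R}_{\#'}$ differ only by the symbol, the second proof is the first with the two symbols swapped. So only one direction per connective needs to be built.

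Each proof will run bottom-up from instances of \textsc{Id} on the immediate subformulae, e.g.\ $[1,2,3:A]$ and $[1,2,3:B]$. It applies exactly one $\#'$-rule in dimension $3$ and one $\#$-rule in each of dimensions $1$ and $2$. Every active occurrence created by \textsc{Id} must be consumed, because neither weakening nor contraction is available. The negation case shows the pattern. From $[1,2,3:A]$, rule $3-'$ consumes the dimension-$1$ occurrence and gives $[2:A;3:A,-'A]$. Rule $2-$ consumes the dimension-$2$ occurrence and gives $[2:-A;3:A,-'A]$. Finally $1-$ consumes the dimension-$3$ occurrence and gives $[1:-A;2:-A;3:-'A]$. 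For the one- and two-premise rules in dimensions $1$ and $3$ of $\wedge,\vee,\supset$, I would do the same occurrence bookkeeping. For the additive versions the shared-context requirement forces both premises to carry the same side formulae, so there I will choose which \textsc{Id} instances feed which premise.

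The main obstacle is the three-premise dimension-$2$ rules ($2\wedge^\ast$, $2\vee^\dag$, $2\supset^\ddag$ and their additive variants). They demand the same subformula in two dimensions at once, e.g.\ $[2,3:A]s$, and they must meet the context restrictions. Moreover a single $\textsc{Id}$-instance supplies only one occurrence per dimension, so the premises cannot be fed without duplicating or relocating occurrences. My device for this is \textsc{Cut} against \textsc{Id}. Cutting $[i:A]s$ against $[1,2,3:A]$ on dimension $j\neq i$ yields $s$ together with $A$ in the two dimensions other than $j$. This moves and copies an occurrence of $A$ into the pair of dimensions that the premise needs. I would first derive the needed premises (such as $[2,3:A]\times[\ldots]$) by such cuts, checking for each rule that the resulting side contexts satisfy $^\ast$, $^\dag$ or $^\ddag$. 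Only then would I apply the dimension-$2$ rule, with the remaining occurrences absorbed by the dimension-$1$ and dimension-$3$ rules.

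If a direct construction gets stuck on the context restrictions, the fallback is to choose a different order of rule application. For instance, I would introduce the dimension-$2$ principal formula first from three separate \textsc{Id}-derived premises, and afterwards cut away the surplus subformula occurrences into the dimensions where the $1\#$ and $3\#'$ rules consume them.
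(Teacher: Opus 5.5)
\paragraph{Verdict.} Your negation derivation and the reduction to one direction by symbol-swapping are fine. For the binary connectives, however, the plan misdiagnoses the obstacle and relies on a device that cannot do the job.

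\paragraph{The Cut device does not work.} The pure premises of the dimension-2 rules need no manufacturing: $[1,2,3:A]$ already is $[2,3:A]s$ with $s=[1:A]$, and $[1,2,3:B]$ is $[1,2:B]u$ with $u=[3:B]$. Cut against \textsc{Id} also cannot relocate anything. From $[i:A]s$ it yields $[i:A;k:A]s$ for $k\notin\{i,j\}$, so the old occurrence stays and a new one is added. More generally, $c(st)=c([i:A]s)+c([j:A]t)-2>c([i:A]s)$, because every provable sequent has complexity at least $3$ (cf.\ Lemma 2). So Cut can neither move occurrences nor ``cut away'' surplus ones, and your fallback fails for the same reason. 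With no weakening or contraction, every occurrence it adds is permanent, yet the targets, e.g.\ $[1,2:A\wedge B;3:A\wedge' B]$, have complexity exactly $3$.

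\paragraph{The real difficulty.} Your bookkeeping never addresses what happens when the dimension-2 rule merges three premises. If their side contexts were simply pooled, $2\wedge$ applied to $[1:A;2,3:A]$, $[1:B;2,3:B]$ and $[1:A,B;2:A,B;3:A\wedge'B]$ would give $[1:A,B,A,B;2:A\wedge B;3:A\wedge'B]$. Finishing would then require $1\wedge$ twice plus $1C$, which is exactly the unrestricted derivation for $\wedge_u$ in Example 3 and the Appendix.

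\paragraph{The missing idea.} The restrictions $^\ast,^\dag,^\ddag$ are not side conditions to be checked after the fact; they are the mechanism of the proof. They make the side contexts of the two pure premises \emph{shared with}, i.e.\ absorbed into, the context of the mixed premise rather than added to it (cf.\ the $2\otimes$ clause in Table 5 and Example 3). With that, the paper's derivations use no Cut at all.

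\begin{itemize}
\item \emph{Multiplicative $\wedge$.} Feed the pure premises directly with $[1,2,3:A]$ and $[1,2,3:B]$. Build the mixed premise $[1,2:A,B;3:A\wedge'B]$ by $3\wedge'$ from two further \textsc{Id} instances. Apply $2\wedge$, identifying the dimension-1 contexts $A$ and $B$ of the pure premises with the $A,B$ of the mixed premise, to get $[1:A,B;2:A\wedge B;3:A\wedge'B]$. Close with $1\wedge$.
\item \emph{Multiplicative $\vee$ and $\supset$.} The mixed premise comes from $1\vee$ resp.\ $1\supset$. Sharing happens in dimension $3$ resp.\ dimensions $1$ and $3$. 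You close with $3\vee'$ resp.\ $3\supset'$.
\item \emph{Additive versions.} Your remark about choosing \textsc{Id} instances for shared contexts points in the right direction, but Cut again plays no role. First apply the one-premise outer rules ($1\wedge1/2$, $3\vee'1/2$, $3\supset'1/2$) to separate \textsc{Id} instances, so that the premises of $2\#1$ and $2\#2$ already carry the required shared context. Then merge the two dimension-2 conclusions, which now have identical contexts, with the additive two-premise rule in the remaining outer dimension ($3\wedge'$, $1\vee$, $1\supset$).
\end{itemize}
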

\begin{proof}\

\noindent 
\[
    \begin{prooftree}
    \hypo{[1,2,3:A]}
    \infer1[$1-$]{[1:-A, A;2:A]}  
    \infer1[$3-^{\prime}$]{[1:-A;2:A; 3:-^{\prime}A]}  
    \infer1[$2-$]{[1,2: -A; 3: -^{\prime}A]}  
    \end{prooftree}
    \quad \text{ or } \quad
    \begin{prooftree}
    \hypo{[1,2,3:A]}
    \infer1[$3-^{\prime}$]{[2:A, 3:A, \sim^{\prime} A]}  
    \infer1[$1-$]{[1:-A;2:A; 3:-^{\prime} A]}  
    \infer1[$2-$]{[1,2:-A; 3: -^{\prime} A]}  
    \end{prooftree} 
\]
Multiplicative connectives:
\[
    \begin{prooftree}
        \hypo{[1,2,3:A]}
        \hypo{[1,2,3:B]}
        \hypo{[1,2,3:A]}
        \hypo{[1,2,3:B]}
        \infer2[3$\wedge^\prime$]{[1,2: A,B; 3:A\wedge^\prime B]}
        \infer3[2$\wedge$]{[1:A,B;2:A\wedge B; 3: A\wedge^\prime B]}
        \infer1[1$\wedge$]{[1,2:A\wedge B; 3: A\wedge^\prime B]}
    \end{prooftree}
\]
\[
    \begin{prooftree}
        \hypo{[1,2,3:A]}
        \hypo{[1,2,3:B]}
        \hypo{[1,2,3:A]}
        \hypo{[1,2,3:B]}
        \infer2[1$\vee$]{[1: A\vee B; 2,3: A,B]}
        \infer3[2$\vee$]{[1:A\vee B; 2:A\vee B; 3: A,B]}
        \infer1[3$\vee^\prime$]{[1,2:A\vee B; 3: A\vee^\prime B]}
    \end{prooftree}
    \]
    \[
    \begin{prooftree}
        \hypo{[1,2,3:A]}
        \hypo{[1,2,3:B]}
        \hypo{[1,2,3:A]}
        \hypo{[1,2,3:B]}
        \infer2[1$\supset$]{[1: A, A\supset B; 2:A,B;3:B]}
        \infer3[2$\supset$]{[1:A, A\supset B; 2:A\supset B; 3: B]}
        \infer1[3$\supset^\prime$]{[1,2:A\supset B; 3: A\supset^\prime B]}
    \end{prooftree}
\]
Additive connectives: see Figure 1
\begin{sidewaysfigure}[htbp]
\fbox{
\parbox{\linewidth}{
\tiny
\[
\begin{prooftree}
    \hypo{[1,2,3:A]}
    \infer1[1$\wedge$1]{[1:A\wedge B;2,3:A]}
    \hypo{[1,2,3:B]}
    \infer1[1$\wedge$2]{[1:A\wedge B;2,3:B]}
    \hypo{[1,2,3:A]}
    \infer1[1$\wedge$1]{[1:A\wedge B;2,3:A]}
    \infer3[2$\wedge$1]{[1:A\wedge B;2:A\wedge B;3: A]}
    \hypo{[1,2,3:A]}
    \infer1[1$\wedge$1]{[1:A\wedge B;2,3:A]}
    \hypo{[1,2,3:B]}
    \infer1[1$\wedge$2]{[1:A\wedge B;2,3:B]}
    \hypo{[1,2,3:B]}
    \infer1[1$\wedge$2]{[1:A\wedge B;2,3:B]}
    \infer3[2$\wedge$2]{[1:A\wedge B;2:A\wedge B;3: B]}
    \infer2[3$\wedge^\prime$]{[1:A\wedge B;2:A\wedge B;3: A\wedge^\prime B]}
\end{prooftree}
\]
\[
\begin{prooftree}
    \hypo{[1,2,3:A]}
    \infer1[3$\vee^{\prime}$1]{[1,2:A;3:A\vee^{\prime} B]}
    \hypo{[1,2,3:B]}
    \infer1[3$\vee^{\prime}$2]{[1,2:B;3:A\vee^{\prime} B]}
    \hypo{[1,2,3:A]}
    \infer1[3$\vee^{\prime}$1]{[1,2:A;3:A\vee^{\prime} B]}
    \infer3[2$\vee$1]{[1:A;2:A\vee B;3: A\vee^{\prime} B]}
    \hypo{[1,2,3:A]}
    \infer1[3$\vee^{\prime}$1]{[1,2:A;3:A\vee^{\prime} B]}
    \hypo{[1,2,3:B]}
    \infer1[3$\vee^{\prime}$2]{[1,2:B;3:A\vee^{\prime} B]}
    \hypo{[1,2,3:B]}
    \infer1[3$\vee^{\prime}$2]{[1,2:B;3:A\vee^{\prime} B]}
    \infer3[2$\vee$2]{[1:B;2:A\vee B;3: A\vee^{\prime} B]}
    \infer2[1$\vee$]{[1:A\vee B;2:A\vee B;3: A\vee^\prime B]}
\end{prooftree}
\]
\[
\begin{prooftree}
    \hypo{[1,2,3:A]}
    \infer1[3$\supset^{\prime}$1]{[2:A;3:A\supset^{\prime} B, A]}
    \hypo{[1,2,3:B]}
    \infer1[3$\supset^{\prime}$2]{[1,2:B;3:A\supset^{\prime} B]}
    \hypo{[1,2,3:A]}
    \infer1[3$\supset^{\prime}$1]{[2:A;3:A\supset^{\prime} B, A]}
    \infer3[2$\supset$1]{[2:A\supset B;3: A\supset^{\prime} B, A]}
    \hypo{[1,2,3:A]}
    \infer1[3$\supset^{\prime}$1]{[2:A;3:A\supset^{\prime} B, A]}
    \hypo{[1,2,3:B]}
    \infer1[3$\supset^{\prime}$2]{[1,2:B;3:A\supset^{\prime} B]}
    \hypo{[1,2,3:B]}
    \infer1[3$\supset^{\prime}$2]{[1,2:B;3:A\supset^{\prime} B]}
    \infer3[2$\supset$2]{[1:B;2:A\supset B;3: A\supset^{\prime} B]}
    \infer2[1$\supset$]{[1:A\supset B;2:A\supset B;3: A\supset^\prime B]}
\end{prooftree}
\]
}}
\caption{proof of Theorem 1---additive connectives}
\end{sidewaysfigure}
\hfill $\blacksquare$
\end{proof}
The usual strategy of proving conservativity via \textsc{Cut}-eliminability/the sub-formula property is unavailable to us as our base calculus $\{\textsc{Cut}, \textsc{Id}\}$ already violates \textsc{Cut}-eliminability. Instead, we will prove conservativity directly. We first show what sequents are provable in the base system (Lemma 1). We then prove that no other connective-free sequents are provable in the extended system (Lemmata 2,3).
\begin{definition}[derivation height $h$]\

    \noindent The height $h$ of a derivation is the maximum number of successive rule applications from a (possibly empty) leaf to its root, i.e. the length of the longest branch.
\end{definition}
\begin{lemma}\

    \noindent For all theorems $\mathfrak{s}$ of $\{\textsc{Cut}, \textsc{Id}\}$ and for all $i \in \mathcal{D}$, $\mathfrak{s}=[1:\Gamma_1;2:\Gamma_2;3:\Gamma_3]$ such that for some $\mathcal{L}_\textbf{Cal}$-formula $A$, each $\Gamma_{i}$ only contains some non-zero number of instances of $A$ .
\end{lemma}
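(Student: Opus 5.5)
We argue by induction on the derivation height $h$ (Definition 8) of a proof in $\{\textsc{Cut},\textsc{Id}\}$. The invariant to carry along is the one stated in Lemma 1: every theorem has the form $[1:A^{n_1};2:A^{n_2};3:A^{n_3}]$ for a single formula $A$ and multiplicities $n_1,n_2,n_3\geq 1$, where $A^{n}$ denotes the multiset of $n$ copies of $A$. Since the only rules are \textsc{Id} and \textsc{Cut}, and \textsc{Id} has no premisses, the only leaves of a proof are empty. The case analysis therefore splits cleanly.

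\emph{Base case} ($h=1$). The only rule without premisses is \textsc{Id}, whose conclusion is $[\mathcal{D}:A]=[1:A;2:A;3:A]$. This has the required form with $n_1=n_2=n_3=1$.

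\emph{Inductive step.} The last rule must be \textsc{Cut}, with premisses $[i:B]s$ and $[j:B]t$ for some $i\neq j$, and conclusion $st$. By the induction hypothesis, $[i:B]s=[1:C^{m_1};2:C^{m_2};3:C^{m_3}]$ for a single formula $C$ with all $m_d\geq 1$. Since the $i$th component contains $B$, we get $C=B$. In the same way, the second premiss is $[1:B^{k_1};2:B^{k_2};3:B^{k_3}]$ with all $k_d\geq 1$. Removing the cut formula gives
\[ s=[\,d:B^{m_d}\ (d\neq i);\ i:B^{m_i-1}\,],\qquad t=[\,d:B^{k_d}\ (d\neq j);\ j:B^{k_j-1}\,]. \]
Hence $st$ has, in each dimension $d$, the multiplicity
\[ m_d+k_d-[d=i]-[d=j], \]
where $[d=i]$ is $1$ if $d=i$ and $0$ otherwise. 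We need this to be at least $1$ for every $d$. Because $i\neq j$, at most one of the two subtractions applies in any given dimension. So the multiplicity is at least $m_d+k_d-1\geq 1$. The conclusion is again built over the single formula $B$ with non-zero multiplicities everywhere, which closes the induction.

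The main obstacle is making sure that \textsc{Cut} cannot empty a component. This is exactly where the side condition $i\neq j$ of \textsc{Cut$^2$} and the three-dimensionality of \textsc{Id} are needed. A cut on the same dimension twice could delete both copies in that dimension, and an $\textsc{Id}^2$ axiom would start with an empty component. The multiplicity computation above handles this. A secondary point to check is that the context sub-sequents $s,t$ are forced to consist only of copies of the cut formula, which follows because the premisses are already theorems. One might also want to note that the argument gives a little more: the multiplicities satisfy $n_1+n_2+n_3\equiv 3 \pmod 1$, and in particular can grow without bound. The lemma itself, however, only requires non-emptiness and uniformity of the formula.
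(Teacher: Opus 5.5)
Your proof is correct and follows essentially the same route as the paper: induction on proof height with \textsc{Id} as the base case, and a multiplicity count in the \textsc{Cut} step showing that, because $i\neq j$, each component loses at most one copy of the shared cut formula and so stays non-empty. You also make explicit that both premisses are built over the same formula, which the paper leaves implicit; the two slips (derivation height is Definition 10, not 8, and your closing congruence modulo $1$ is vacuous) do not affect the argument.
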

\begin{proof}
    By induction on the height of the proof $\mathfrak{p}$ of $\mathfrak{s}$. We hypothesise that for any proof $\mathfrak{p}^\prime$ of $\mathfrak{s}^{\prime}$ with $h(\mathfrak{p}^\prime)< h(\mathfrak{p})$, the lemma holds for $\mathfrak{s}^{\prime}$. Let $i,j,k\in \mathcal{D}$ be pairwise distinct. If $h(\mathfrak{p})= 1$, $\mathfrak{s}$ is an instance of \textsc{Id}, the lemma trivially holds. If $h(\mathfrak{p})>1$, the final rule application was an instance of the \textsc{Cut} rule. As both \textsc{Cut}-premises have proofs of lower height than $h(\mathfrak{p})$, the inductive hypothesis applies. Hence, the \textsc{Cut}-premises are of shape $[i:\underbrace{A,\dots, A}_{m\text{-times}};j:\underbrace{A,\dots, A}_{n\text{-times}}; k:\underbrace{A,\dots, A}_{o\text{-times}}]$ and
    $[i:\underbrace{A,\dots, A}_{p\text{-times}};j:\underbrace{A,\dots, A}_{q\text{-times}};k:\underbrace{A,\dots, A}_{r\text{-times}}]$, respectively, for pairwise distinct $i,j,k\in \mathcal{D}$. Thus: $$\mathfrak{s}=[i:\underbrace{A,\dots, A}_{(m-1)\text{-times}}, \underbrace{A,\dots, A}_{p\text{-times}};j:\underbrace{A,\dots, A}_{n\text{-times}},\underbrace{A,\dots, A}_{(q-1)\text{-times}};k:\underbrace{A,\dots, A}_{o\text{-times}}, \underbrace{A,\dots, A}_{r\text{-times}}].$$ As $p,n,o,r>0$, the lemma holds.\hfill $\blacksquare$
\end{proof}
\begin{definition}[sequent complexity $c$]\

    \noindent The \textit{complexity} $c$ of a sequent $[1:\Gamma;2:\Delta;3:\Sigma]$ is $|\Gamma|+|\Delta|+|\Sigma|$.
\end{definition}
From Definition 11 and Lemma 1, we get:
\begin{corollary}\

    \noindent For all theorems $\mathfrak{s}$ of $\{\textsc{Cut}, \textsc{Id}\}$, $c(\mathfrak{s})\geq3$.
\end{corollary}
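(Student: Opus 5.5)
The corollary follows directly from Lemma 1 together with Definition 11. I would therefore not set up a new induction, and instead read off the cardinalities of the three components supplied by Lemma 1.

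First, let $\mathfrak{s}$ be any theorem of $\{\textsc{Cut}, \textsc{Id}\}$. By Lemma 1, $\mathfrak{s}=[1:\Gamma_1;2:\Gamma_2;3:\Gamma_3]$ for some formula $A$, and each $\Gamma_i$ ($i\in\{1,2,3\}$) consists of a non-zero number $n_i$ of copies of $A$. So $|\Gamma_i|=n_i\geq 1$ for each of the three dimensions. By Definition 11, $c(\mathfrak{s})=|\Gamma_1|+|\Gamma_2|+|\Gamma_3|=n_1+n_2+n_3\geq 1+1+1=3$.

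The only point needing care is that Lemma 1 really gives non-emptiness of \emph{every} component, not merely that the components contain nothing but $A$. The statement of Lemma 1 asserts this ("non-zero number of instances"), and its proof backs it up.

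In the \textsc{Cut} case, the component $i$ of the conclusion receives $(m-1)+p$ copies, where $p\geq1$ comes from the second premiss. Likewise, $j$ receives $n+(q-1)$ copies with $n\geq1$ from the first premiss. The untouched component $k$ receives $o+r\geq 2$ copies.

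So I would simply cite Lemma 1 for the non-emptiness of each of the three components and then add. No case split beyond this is required, and I expect no real obstacle. If one wanted a sharper statement, the same bookkeeping shows that each \textsc{Cut} application raises complexity by $c(\text{left})+c(\text{right})-2\geq 4$. Equality $c(\mathfrak{s})=3$ then holds exactly for instances of \textsc{Id}, but the corollary only needs the lower bound.
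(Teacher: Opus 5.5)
Your proof is correct and matches the paper, which derives the corollary directly from Lemma 1 (each of the three components contains a non-zero number of copies of a single formula $A$) and Definition 11, by summing the component sizes. One small slip in your optional aside: a \textsc{Cut} conclusion \emph{has} complexity $c(\text{left})+c(\text{right})-2\geq 4$; it is not raised by that amount. This does not affect the argument.
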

It follows that there are two ways our extended system could violate conservativity: by proving some sequent with complexity $\leq 2$, or by proving some connective-free sequent that contains instances of more than one distinct formula. In turn, we show that neither is the case.\begin{lemma}\

    \noindent Let \# be any multiplicative or additive \textbf{K3} connective with operational rules $\mathfrak{R}_\#$.
    
    For all theorems $\mathfrak{s}$ of $\{\textsc{Cut}, \textsc{Id}\}\cup \mathfrak{R}_\#$, $c(\mathfrak{s})\geq3$.
\end{lemma}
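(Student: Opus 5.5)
Induction on the height $h$ of a proof $\mathfrak{p}$ of $\mathfrak{s}$ in $\{\textsc{Cut}, \textsc{Id}\}\cup \mathfrak{R}_\#$, following Lemma 1. The hypothesis is that every theorem with a proof of smaller height has complexity at least $3$. If $h(\mathfrak{p})=1$, then $\mathfrak{s}$ is an instance of \textsc{Id}, $[1,2,3:A]$, and $c(\mathfrak{s})=3$. If $h(\mathfrak{p})>1$, I split on the last rule: \textsc{Cut}, or one of the operational rules of $\mathfrak{R}_\#$. The premises of each have lower height, so the inductive hypothesis applies to each of them.

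For \textsc{Cut}, the premises $[i:A]s$ and $[j:A]t$ each have complexity $\geq 3$. The conclusion $st$ therefore has complexity $c([i:A]s)+c([j:A]t)-2\geq 4$.

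For the operational rules, I go through Tables 2 and 4 connective by connective and compare premise and conclusion complexity.
\begin{itemize}
\item \emph{One-premise rules replacing one active formula by one principal formula} ($1-,2-,3-$, the additive $1\wedge$1/2, $3\vee$1/2, $3\supset$1/2): complexity is preserved, so the conclusion has $c\geq 3$.
\item \emph{One-premise rules replacing two active formulae by one principal formula} (multiplicative $1\wedge$, $3\vee$, $3\supset$): complexity drops by exactly one. This is where a naive bound fails, since a premise of complexity $3$ would give a conclusion of complexity $2$.
\item \emph{Multi-premise multiplicative rules} ($3\wedge$, $1\vee$, $1\supset$, and the three-premise $2\#$ rules): the conclusion collects the contexts of several premises, each of complexity $\geq 3$. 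After removing at most two active formulae per premise and adding one principal formula, the conclusion still has complexity $\geq 3$. The context restrictions $^\ast,^\dag,^\ddag$ only identify parts of the contexts, and the contexts of at least two premises survive; I will check this case by case.
\item \emph{Additive rules}: the conclusion's context $s$ is the context of one premise, so the count reduces to the one-premise cases.
\end{itemize}

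The main obstacle is the two-active-formula one-premise rules (and the additive two-active $2\#$ premise $[2:A]s$, which is fine since it has one active formula). Here I need a strengthened invariant: a premise of the shape $[1:A,B]s$ cannot have complexity exactly $3$. I would first prove an auxiliary claim, by the same induction, that every theorem of complexity exactly $3$ is an instance of \textsc{Id}, $[1,2,3:C]$, or a theorem with one formula in each dimension.

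The argument for the claim is as follows. Complexity-$3$ conclusions can only arise from complexity-preserving one-premise rules, which keep one formula in each dimension, or from \textsc{Id}. They cannot arise from \textsc{Cut} (complexity $\geq 4$), from multi-premise rules, or from two-active rules applied to complexity-$3$ premises; the last is excluded inductively, because a complexity-$3$ theorem never has two formulae in one component. So a premise $[1:A,B]s$ has complexity $\geq 4$, its conclusion has complexity $\geq 3$, and both the lemma and the auxiliary claim are preserved.

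If the "one formula per dimension" invariant turns out to be too strong for some rule (for example $1-$ moving a formula between dimensions), I would weaken it to the statement that "no complexity-$3$ theorem has two formulae in one component".
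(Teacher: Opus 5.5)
\textbf{The $3\supset$ case fails, and the lemma is false for multiplicative $\supset$.} Your induction is essentially the paper's own argument. Both use \textsc{Id}, \textsc{Cut} giving $c\geq 4$, complexity-preserving one-active-formula rules, and a special argument that a premise with two formulae in one component has complexity $\geq 4$. The gap is the multiplicative $3\supset$ rule. You group it with $1\wedge$ and $3\vee$, but your auxiliary invariant does not cover it. Its premise is $[1:A;3:B]s$, with the two active formulae in \emph{different} components. So neither ``one formula in each dimension'' nor ``no two formulae in one component'' excludes a premise of complexity $3$. In fact \textsc{Id} itself has this shape (take $A=B=p$, $s=[2:p]$):
\[
\begin{prooftree}
\hypo{\emptyset}
\infer1[\textsc{Id}]{[1:p;2:p;3:p]}
\infer1[$3\supset$]{[2:p;3:p\supset p]}
\end{prooftree}
\]
This is a proof in $\{\textsc{Id},\textsc{Cut}\}\cup\mathfrak{R}_\supset$ of a sequent of complexity $2$. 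It is the three-dimensional analogue of deriving $[2:p\supset p]$ from \textsc{Id} in \textbf{LK}. No strengthening of your invariant can repair this case.

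Your auxiliary claim also fails in this system. Cutting $[2:p;3:p\supset p]$ (on $p$ in component $2$) against \textsc{Id} (on $p$ in component $1$) gives $[2:p;3:p,p\supset p]$. This has complexity $3$, an empty first component and two formulae in the third. The paper's proof has the same hole: it treats only $\wedge$ explicitly and declares the $\vee$ and $\supset$ cases analogous.

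\textbf{Where your argument does work.} For the additive connectives and negation your argument is fine, since no rule lowers complexity. For multiplicative $\wedge$ and $\vee$ your strategy also works, but the three-premise $2\#$ case is not justified by ``the contexts of at least two premises survive''. In the paper's own uniqueness derivation for $2\wedge$, the conclusion $[1:A,B;2:A\wedge B;3:A\wedge' B]$ carries exactly the context of the middle premise $[2:A,B]t$. So you need $c(t)\geq 2$, which means applying your auxiliary invariant to that middle premise; this is how the paper argues. A cleaner invariant for multiplicative $\wedge$ and $\vee$ is that every component of a theorem is non-empty. \textsc{Id}, \textsc{Cut} and all those rules preserve it, and $3\supset$ is precisely the rule that removes a formula from a component without replacing it.

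\textbf{How to repair the statement.} Restrict it, for instance by excluding multiplicative $\supset$, or by stating it only for $\#$-free theorems. The latter is all that Theorem 2 actually uses. The $\supset$ case of conservativity then needs a separate argument.
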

\begin{proof}
    We refer to the connective rules as defined in Tables 2 and 4, including the use of `$s$', `$t$', `$u$' for context sub-sequents.
    
    Proof by induction on the height of the proof $\mathfrak{p}$ of $\mathfrak{s}$. If $h(\mathfrak{p})=1$, $\mathfrak{s}$ is an instance of \textsc{Id} and $c(\mathfrak{s})=3$. If $h(\mathfrak{p})>1$, the last rule applied could be a connective rule or \textsc{Cut}. Assuming the latter, both \textsc{Cut}-premiss sequents $[i:A]s$ and $[j:A]t$ have complexity $\geq 3$ in virtue of the inductive hypothesis. Hence, $c(s)\geq 2$ and $c(t)\geq 2$. As $\mathfrak{s}=st$ and $c(st)\geq 4$, the lemma holds.

    Assume that the last rule applied was a $-$-rule. In virtue of the hypothesis, the premiss sequent of each $-$-rule is of complexity $\geq 3$. By the rule definition, the complexity of the conclusion is also of complexity $\geq 3$ and the lemma holds.
    
    Assume that the last rule application was a multiplicative $\wedge$-rule. The case of $3\wedge$ is equivalent to that of \textsc{Cut}, only with $\mathfrak{s}=[3:A\wedge B]st$ and, thus, $c(\mathfrak{s})\geq 5$. In case of $1\wedge$, $c([1:A\wedge B]s)\geq 3$ since $c(s)\geq 2$. This must be as the premiss sequents $[1:A,B]s$ cannot be an instance of \textsc{Id} and the conclusion of \textsc{Cut} and $3\wedge$ must be of complexity $\geq 4$. The same holds for conclusions of $2\wedge$. Whilst $c([2,3:A]s)\geq 3$ and $c([2,3:B]u)\geq 3$ by our inductive hypothesis, $c([2:A,B]t)\geq 4$. This is as $[2:A,B]t$ cannot be an instance of \textsc{Id}, nor can it be a conclusion of $1\wedge$, unless $c([1:A\wedge B]s)\geq 4$. For if $c([1:A\wedge B]s)= 3$, $c([1:A, B]s)= 4$. This would mean that the $1\wedge$-premiss $[1:A, B]s$ must have been a conclusion of \textsc{Cut} since $3\wedge$ only derives sequents of complexity $\geq 5$. As the conclusion of \textsc{Cut} is of greater complexity than its premises, the latter must have been instances of \textsc{Id}. Hence, $[1:A, B]s=[1:A,A;2,3:A]$ and $[1:A\wedge B]s=[1:A\wedge A;2,3:A]$, which cannot serve as the $3\wedge$-premiss $[2:A,B]t$. Therefore, $c(t)\geq 2$ and $c([stu])\geq 4$. 

    Assume that we last applied an additive $\wedge$-rule. $1\wedge$ does not modify sequent complexity from premiss to conclusion and the lemma trivially holds. In the case of $3\wedge$, $c([3:A]s)\geq 3$ in virtue of the hypothesis. Hence, $c(s)\geq 2$ and $c([3:A\wedge B]s)\geq 3$. In the case of $2\wedge(1)$, $c([2:A]s)\geq 3$ due to our hypothesis. Thus, $c(s)\geq 2$ and $c([2:A\wedge B]s)\geq 3$.

    The cases of multiplicative and additive $\vee$- and $\supset$-rules are analogous to the respective $\wedge$ cases and are left as an exercise to the reader. \hfill $\blacksquare$
\end{proof}
\begin{lemma}\

    \noindent Let \# be any multiplicative or additive \textbf{K3} connective with operational rules $\mathfrak{R}_\#$.
    
    Any sequent $\mathfrak{s}=[1:\Gamma_1;2:\Gamma_2;3:\Gamma_3]$ provable in $\{\textsc{Id, Cut}\}\cup \mathfrak{R}_{\#}$ contains:
        \begin{enumerate}
        \item instances of some primitive formula $A^{pr}$ and nothing else, or
        \item some instance of a \#-formula $\texttt{\#}$ with active formulae $A,B$ s.t.
        \begin{enumerate}
            \item For all $i\in\mathcal{D}, PS(\Gamma_i)=PS(\texttt{\#})$ if $PS(A)=PS(B)$ (including if $\texttt{\#}$ is unary),
            \item $\texttt{\#}\in \Gamma_1$ if $PS(A)\neq PS(B)$, and $\#$ is multiplicative $\vee, \supset$ or additive $\wedge$,
            \item $\texttt{\#}\in \Gamma_3$ if $PS(A)\neq PS(B)$, and $\#$ is multiplicative $\wedge$ or additive $\vee, \supset$.
        \end{enumerate} 
    \end{enumerate}
\end{lemma}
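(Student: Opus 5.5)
The plan is to argue by induction on the height $h(\mathfrak{p})$ of a proof $\mathfrak{p}$ of $\mathfrak{s}$ in $\{\textsc{Id},\textsc{Cut}\}\cup\mathfrak{R}_\#$, tracing a distinguished \#-formula $\texttt{\#}$ downward through the proof. In the base case $h(\mathfrak{p})=1$, $\mathfrak{s}$ is an instance of \textsc{Id}. If $A$ is primitive, clause 1 holds. If $A$ is a \#-formula, then every component contains $\texttt{\#}$ itself, so $PS(\Gamma_i)=PS(\texttt{\#})$ for all $i$ and clauses 2(b) and 2(c) hold trivially. The inductive hypothesis is that every premiss of the last rule satisfies clause 1 or clause 2.

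In the inductive step I will split on the last rule applied. \emph{Case \#-rule.} The conclusion contains a fresh principal \#-formula in the component $d$ fixed by the rule, so clause 2 is witnessed by that formula. For 2(b) and 2(c) I will read off from Tables 2 and 4 which components receive principal formulae whose active formulae $A,B$ can have distinct primitive sub-formulae:
\begin{itemize}
\item multiplicative $1\vee$ and $1\supset$, and additive $1\wedge i$, place the formula in $\Gamma_1$;
\item multiplicative $3\wedge$, and additive $3\vee i$ and $3\supset i$, place it in $\Gamma_3$.
\end{itemize}
For the remaining rules (one-premiss multiplicative rules, the three-premiss $2$-rules, and the additive two-premiss rules), each premiss already contains both $A$ and $B$, or contains them in every component. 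By the inductive hypothesis together with Lemma 1, the premisses must then be \textsc{Id}-like in the sense that all components share the same primitives. Either the premisses share primitives, giving 2(a), or they contain an earlier \#-formula in the correct component, which survives into the conclusion and still witnesses 2(b) or 2(c).

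\emph{Case \textsc{Cut}.} The conclusion $st$ comes from $[i:C]s$ and $[j:C]t$. If both premisses fall under clause 1, they are built from a single primitive $C$, so $st$ falls under clause 1 by the computation of Lemma 1, using the non-emptiness guaranteed by Lemma 2. If some premiss falls under clause 2 with witness $\texttt{\#}$, I will show that $\texttt{\#}$ survives in $st$. Either $\texttt{\#}$ is not the cut formula, or it occurs in at least one other component of the premiss: by Lemma 2 each premiss has complexity $\geq 3$, and under 2(a) all components share primitives. Primitive sets behave under union, so 2(a) is preserved, because $PS(s)\cup PS(t)$ is contained in $PS(\texttt{\#})$ whenever every component of both premisses is. Finally, the component membership required by 2(b) and 2(c) is unchanged by \textsc{Cut}, since \textsc{Cut} never moves formulae between components.

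The main obstacle is the \textsc{Cut} case in which the cut formula is exactly the witnessing \#-formula of clause 2(b) or 2(c) and that formula occurs only once. Here I would first try to show that such a premiss must also contain another \#-formula, or active formulae, in the relevant component. I would argue this by a secondary induction tracking where $A$ and $B$ sit in the premisses of the introducing rule: for example, for $1\vee$ the context $st$ already carries parts of Lemma 1's single-formula shape. If that fails, I would strengthen the inductive hypothesis to record the full multiset pattern of \#-occurrences, in the spirit of the complexity counting in Lemma 2, so that the witness cannot be cut away.
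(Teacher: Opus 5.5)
\paragraph{The unresolved \textsc{Cut} case.} The genuine gap is the case you flag yourself as the main obstacle: a \textsc{Cut} whose cut formula $C$ is the only witnessing $\#$-formula in the relevant component $\Gamma_n$ ($n\in\{1,3\}$) of the premiss $[i:C]s$, so that $i=n$. Neither of your remedies closes it.

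The first remedy is false. For multiplicative $\wedge$, applying $3\wedge$ to two instances of \textsc{Id} gives $[1,2:p,q;3:p\wedge q]$. This satisfies 2(c) only through $p\wedge q\in\Gamma_3$. It contains no other $\wedge$-formula anywhere, and nothing else in $\Gamma_3$. So no secondary induction on the introducing rule can find a second witness, or the active formulae, in that component of that premiss. The second remedy, recording the full multiset pattern of $\#$-occurrences, is left unspecified. As it stands, the induction fails exactly at the step that carries the content of 2(b)/(c).

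\paragraph{The missing idea.} The paper handles this case by applying the induction hypothesis to the \emph{other} premiss and using the side condition $i\neq j$ of \textsc{Cut}. The premiss $[j:C]t$ also contains the $\#$-formula $C$, whose active formulae have different primitive sub-formulae. So by the hypothesis it has a $\#$-formula in $\Gamma_n$. Here clause 2(b)/(c) is used in the form ``any provable sequent containing such a $\#$-formula has a $\#$-formula in $\Gamma_n$'', which is how the paper uses it. Since $j\neq i=n$, that occurrence is not the cut occurrence, lies in $t$, and survives into $st$. Put contrapositively: if neither $s$ nor $t$ kept a $\#$-formula in $\Gamma_n$, both premisses' witnesses would have to be the cut occurrences, forcing $i=n=j$. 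In the example above, if you cut against $[1:p\wedge q;2:p,q;3:p\wedge q]$ (obtained by $1\wedge$), the latter's $\Gamma_3$-occurrence survives.

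\paragraph{A smaller flaw in the 2(a) sub-case.} There the witness need not survive \textsc{Cut}. Your dichotomy ``either it is not the cut formula or it occurs in another component'' does not follow from Lemma 2's complexity bound. For negation, take $[1:-p,p;2:p]$ (by $1-$ from \textsc{Id}) and $[1:p;2:-p;3:p]$ (by $2-$ from \textsc{Id}). Cutting on $-p$ with $i=1$, $j=2$ gives $[1:p,p;2:p;3:p]$, with no negation left. The paper handles this by noting that when no $\#$-formula survives, every formula of $st$ is primitive with the same primitive sub-formulae as the witness. So $st$ falls under clause 1 rather than 2(a).
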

\begin{proof}
    By induction on the height of the proof $\mathfrak{p}$ of $\mathfrak{s}$. If $h(\mathfrak{p})=1$, then $\mathfrak{s}$ is an instance of \textsc{Id} and contains exclusively the same primitive or complex formula in each component.
    
    Let $h(\mathfrak{p})>1$. Assume the last rule applied in $\mathfrak{p}$ was \textsc{Cut} and $\mathfrak{p}$ is of the form: \begin{prooftree}
\hypo{}
\infer[rule code={\hbox{\tikz
    \draw (0,0) -- (\hsize,0) -- (0.5\hsize,-2em) (0.5\hsize,-0.875em) node{\small $\mathfrak{p}_1$}  (0.5\hsize,-2em) -- (0,0);}}]1{[i:C]s}
\hypo{}
\infer[rule code={\hbox{\tikz
    \draw (0,0) -- (\hsize,0) -- (0.5\hsize,-2em) (0.5\hsize,-0.875em) node{\small $\mathfrak{p}_2$}  (0.5\hsize,-2em) -- (0,0);}}]1{[j:C]t}
\infer2[Cut]{st}
\end{prooftree}, with $i,j\in \mathcal{D}, i\neq j$. Since $h(\mathfrak{p}_1)<h(\mathfrak{p}),h(\mathfrak{p}_2)<h(\mathfrak{p})$, the lemma holds for both $[i:C]s$ and $[j:C]t$ in virtue of the inductive hypothesis. If both \textsc{Cut}-premises satisfy disjunct 1, $s,t$ exclusively contain $A^{pr}$, as does $st$. Assume $[i:C]s$ satisfies disjunct 1 and $[j:C]t$ satisfies disjunct 2a. Then, $C=A^{pr}$ and $t$ must contain $\texttt{\#}$, as does $st$. Moreover as $PS(\texttt{\#})=PS(C)=A^{pr}$, the only primitive sub-formula of each formula in $t$ must be $A^{pr}$ as is trivially the case for $s$. Thus, $st$ satisfies 2a. Assume both \textsc{Cut}-premises satisfy disjunct 2a. If $C=\texttt{\#}$, $PS(D)=PS(\texttt{\#})$, for each formula $D$ in $st$. Thus, if some $D$ is a $\#$-formula, disjunct 2a is satisfied. If no $D$ is a $\#$-formula, each $D$ is primitive and $PS(D)=\{D\}$. Hence, all formulae in $st$ are instances of the same primitive formula and disjunct 1 is satisfied. If $C\neq \texttt{\#}$, there is some $\texttt{\#}$ in $s$ and some $\texttt{\#}^{\prime}$ in $t$. Since $\#$ and $\#^{\prime}$ both have the same primitive sub-formulae as $C$, so does each formula in $s,t$ and $st$ satisfies 2a. Assume either \textsc{Cut}-premiss satisfies disjunct 2b/c. If at least one of $s,t$ contains $\texttt{\#}$ in $\Gamma_1$/$\Gamma_3$, so does $st$ and disjunct 2b/c holds. If neither contains $\texttt{\#}$, $C=\texttt{\#}$. However, this case cannot obtain as otherwise $[i:A]\in\Gamma_n$ and $[j:A]\in\Gamma_n$ for $n\in\{1,3\}$. Then, $n=i=j$, which cannot be by the specification of \textsc{Cut}.
    
Now, assume the last rule applied in $\mathfrak{p}$ was some $-$-rule with premiss sequent $\mathfrak{s}^\prime$. If $\mathfrak{s}^\prime$ satisfies disjunct 1, it exclusively contains instances of some primitive $A$. Then, $\mathfrak{s}$ contains one instance of $-A$ and otherwise only instances of $A$, satisfying disjunct 2a. If $\mathfrak{s}^\prime$ instead satisfies disjunct 2a, so does $\mathfrak{s}$. For irrespective of the shape of active formula $A$, $PS(A)=PS(C)$ for all $C$ in $s$ due to the hypothesis and $PS(A)=PS(-A)$.

Assume the last rule was an additive $\wedge$-rule. As $1\wedge$ introduces $A\wedge B$ in $\Gamma_1$ and $\mathfrak{s}$ has the same primitive sub-formulae as its premiss if $PS(A)=PS(B)$, $\mathfrak{s}$ trivially satisfies disjunct 2. In case of $3\wedge$, if at least one premiss, $[3:A]s$ or $[3:B]s$, satisfies disjuncts $1$ or $2a$, $PS(C)=PS(A)=PS(B)$ for all $C$ in $s$. Hence, $PS(C)=PS(A\wedge B)$, and $\mathfrak{s}$ satisfies 2a. If at least one premiss satisfies $2b$, $s$ contains at least one instance of $\texttt{\#}$ in the first component, and $\mathfrak{s}$ satisfies disjunct 2. The argument for $2\wedge$ is identical to that for $3\wedge$. 

Assume the last rule was a multiplicative $\wedge$-rule. Assume it was $2\wedge$. If all premises satisfy 1 or 2a, each formula in each premiss has the same atomic sub-formulae, and thus $[2:A\wedge B]stu$, satisfying 2a. If at least one premiss satisfies 2c, $s$, $t$ or $u$ contain $\texttt{\#}$ in the third component, as does $\Gamma_3$. Thus, $\mathfrak{s}$ satisfies 2c. The same reasoning applies to $1\wedge$ and $3\wedge$ except that $\texttt{\#}$ is always principal in the latter case.

The cases of additive/multiplicative $\vee$ and $\supset$ are analogous to additive/multiplicative $\wedge$ and left as an exercise.
\hfill $\blacksquare$
\end{proof}
\begin{theorem}[conservativity]\

    \noindent Let \# be any multiplicative or additive \textbf{K3} connective with operational rules $\mathfrak{R}_\#$.
    
    \# \textit{conservatively extends} $\{\textsc{Id, Cut}\}$, i.e. all sequents provable in $\{\textsc{Id, Cut}\}\cup \mathfrak{R}_{\#}$ contain `\#', unless they are also provable in $\{\textsc{Id, Cut}\}$.
\end{theorem}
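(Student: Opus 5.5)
The plan is to derive the theorem directly from Lemmata 1--3, following the strategy announced before Lemma 1. We take an arbitrary sequent $\mathfrak{s}=[1:\Gamma_1;2:\Gamma_2;3:\Gamma_3]$ provable in $\{\textsc{Id, Cut}\}\cup\mathfrak{R}_\#$ that contains no occurrence of `\#'. We then show that $\mathfrak{s}$ has exactly the shape characterised in Lemma 1. Finally, we show that every sequent of that shape is in fact provable in $\{\textsc{Id, Cut}\}$. This gives the contrapositive form of conservativity.

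\emph{Step 1.} Apply Lemma 3 to $\mathfrak{s}$. Disjunct 2 requires some instance of a \#-formula in $\mathfrak{s}$, so it is excluded. Hence disjunct 1 holds: $\mathfrak{s}$ consists solely of instances of one primitive formula $A^{pr}$. (Strictly, our language contains only the connectives $-,\wedge,\vee,\supset$, and only \# has rules available. So a \#-free sequent cannot contain any other complex formula either: there is no \textsc{Id} instance on another complex formula that survives without \#, and Lemma 3 already covers this.)

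\emph{Step 2.} Show that each component $\Gamma_i$ is nonempty. Lemma 2 gives only $c(\mathfrak{s})\geq 3$, which is too weak for this. So we strengthen it by a parallel induction on proof height: in every provable \#-free sequent, every component contains at least one formula. The key observation is that the only rules that can end a proof of a \#-free sequent are \textsc{Id} and \textsc{Cut}, since every operational rule introduces a \#-formula as principal formula. For \textsc{Cut}, however, the premisses may themselves contain \#, so the induction cannot simply stay inside the \#-free fragment.

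We therefore prove a slightly stronger invariant for all provable sequents. Every component $\Gamma_i$ either contains some formula, or \ldots{} More simply, the invariant we will use is a weighted count: each component holds at least one formula whose primitive content matches that of the sequent's \#-formulae. We check that \textsc{Id} satisfies it, that \textsc{Cut} preserves it, and that each operational rule in Tables 2 and 4 preserves it, the latter because the principal formula replaces its active formulae in the same or another component. Once nonemptiness is established, Lemma 1 shows that $\mathfrak{s}$ has the required shape.

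\emph{Step 3.} Show that every sequent $[1:A^m;2:A^n;3:A^o]$ with $m,n,o\geq1$ is provable from \textsc{Id} and \textsc{Cut}. Start from $[1,2,3:A]$. Cut on component 1 against a second copy of \textsc{Id} via component 2 (or 3); this adds one $A$ to component 2 (or 3). Iterating and choosing the cut dimensions appropriately raises each multiplicity arbitrarily. This is the converse of the computation in Lemma 1.

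I expect Step 2 to be the main obstacle. Lemma 3 constrains only primitive content, and Lemma 2 only total size, so neither directly rules out a \#-free theorem such as $[1:p;3:p,p]$ with an empty middle component. The invariant must be chosen so that it survives \textsc{Cut}s on \#-formulae, and it requires a rule-by-rule check of the three-premiss rules $2\wedge,2\vee,2\supset$ with their context restrictions.
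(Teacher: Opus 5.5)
Your diagnosis is correct, and it also applies to the paper's own proof, which only cites Lemmata 1--3. The paper reduces conservativity to excluding two cases: sequents of complexity $\leq 2$, and \#-free sequents with more than one distinct formula. But $[1:p;3:p,p]$, or even $[1:p,p,p]$, falls under neither case, and neither is an $\{\textsc{Id},\textsc{Cut}\}$-theorem by Lemma 1. So a nonemptiness step and the converse of Lemma 1 (your Step 3) are both genuinely needed.

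The gap is that your Step 2 does not deliver the nonemptiness step. The invariant is never stated, and the candidate you settle on is false: that every component holds a formula of the right primitive content, preserved because the principal formula replaces the active ones ``in the same or another component''. Take $\#=-$. From \textsc{Id} $[1,2,3:p]$, rule $1-$ yields the provable sequent $[1:-p,p;2:p]$, whose third component is empty. Likewise $3-$, multiplicative $3\supset$ and additive $3\supset 1$ can empty the first component. Consuming the active formula in one dimension and placing the principal formula in another is exactly what destroys occupancy. Since, as you note, the induction must range over sequents containing \#, no occupancy invariant of this kind can carry it.

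The missing idea is soundness with respect to the strong Kleene matrix, reading dimension $i$ as value $i$. Here $1$ is false, $2$ undetermined and $3$ true; $-$ swaps $1$ and $3$ and fixes $2$, $\wedge$ is $\min$, $\vee$ is $\max$, and $A\supset B$ is $\max(-A,B)$. Let valuations obey the table of \# but assign arbitrary values to formulae whose main connective is not \#. Call a sequent valid if every valuation gives value $i$ to some member of $\Gamma_i$, for some $i$. Then:
\begin{itemize}
\item \textsc{Id} is valid.
\item \textsc{Cut} preserves validity precisely because $i\neq j$: a valuation refuting $st$ would force both $v(A)=i$ and $v(A)=j$.
\item Every rule of Tables 2 and 4 is sound. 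For example, the premisses of $2\wedge$ force $v(A),v(B)\in\{2,3\}$ with one of them equal to $2$, hence $v(A\wedge B)=2$. The restricted contexts are sub-sequents of the conclusion's context, so they are refuted along with it.
\end{itemize}
A sequent made only of copies of a \#-free formula $A$ with $\Gamma_k=\emptyset$ is then refuted by $v(A)=k$. That is your Step 2 in one line. It does not replace Lemma 3, since $[1:p,q;2:p;3:p]$ is valid, but together with Lemma 3 and your Step 3 it completes the argument.

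There are also two smaller slips. First, \textsc{Id} applies to every $\mathcal{L}_{\textbf{Cal}}$-formula, so for $\#=\wedge$ the sequent $[1,2,3:p\vee q]$ is a \#-free theorem, contrary to your parenthetical in Step 1. This is harmless, but Lemma 3 then has to be read with \#-free formulae treated as atoms. Second, in Step 3, cutting component $1$ of a sequent against component $2$ of \textsc{Id} adds a copy of $A$ to component $3$, not $2$. Suitable choices of cut dimensions still reach every multiplicity $\geq 1$, so the conclusion stands.
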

\begin{proof}
    From Lemmata 1-3. \hfill $\blacksquare$
\end{proof}
\begin{corollary}\

    \noindent
    Any multiplicative or additive \textbf{K3} connective \# is \textit{definable} in $\{\textsc{Id, Cut}\}$.
\end{corollary}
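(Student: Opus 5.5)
The plan is to read the corollary off Definition 6 by discharging its two clauses separately with results already established. Fix an arbitrary multiplicative or additive \textbf{K3} connective \# with operational rules $\mathfrak{R}_\#$, as given in Table 2 or Table 4.

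For the conservativity clause (clause 1 of Definition 6), I will invoke Theorem 2 directly. It states exactly that every sequent provable in $\{\textsc{Id, Cut}\}\cup\mathfrak{R}_\#$ either contains `\#' or is already provable in $\{\textsc{Id, Cut}\}$. Theorem 2 itself rests on Lemmata 1--3. Lemma 1 together with Corollary 1 characterises the theorems of the base system as connective-free sequents built from non-zero copies of a single formula in every component, with complexity $\geq 3$. Lemma 2 rules out extended-system theorems of complexity $\leq 2$. Lemma 3 shows that any \#-free theorem of the extended system consists solely of instances of one primitive formula in every component, which is provable by \textsc{Id} and \textsc{Cut} alone.

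The one point I will check here, since Theorem 2's proof is terse, is that Lemma 3's first disjunct really yields base-provability. It must give a non-zero number of copies of the primitive in \emph{each} component. Non-emptiness of every component is not literally part of Lemma 3, so I will combine it with Lemma 2 and an induction showing that every component stays inhabited. \textsc{Id} fills all three components, and \textsc{Cut} keeps every component non-empty once the premisses have complexity $\geq 3$. I would then derive any such sequent from \textsc{Id} instances by repeated \textsc{Cut}, as in the computation in Lemma 1's proof, since each \textsc{Cut} of two \textsc{Id} instances adds copies.

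For the uniqueness clause (clause 2 of Definition 6), I will invoke Theorem 1. It exhibits, for each \textbf{K3} connective and its primed copy $\#^\prime$ with $\mathfrak{R}_{\#^\prime}$ identical modulo renaming, a proof of $[1,2:\texttt{\#};3:\texttt{\#}^\prime]$ in $\{\textsc{Id, Cut}\}\cup\mathfrak{R}_\#\cup\mathfrak{R}_{\#^\prime}$. Since $\mathcal{D}^-=\{1,2\}$ and $\mathcal{D}^+=\{3\}$ for \textbf{K3}, this is precisely horizontal derivability of $\texttt{\#}^\prime$ from $\texttt{\#}$. The converse direction follows by the symmetry of renaming, since swapping the roles of \# and $\#^\prime$ in the same derivation gives $[1,2:\texttt{\#}^\prime;3:\texttt{\#}]$.

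With both clauses satisfied, \# is definable in $\{\textsc{Id, Cut}\}$ by Definition 6. The main obstacle is not the corollary itself but the non-emptiness subtlety in the conservativity clause, where Lemma 3's disjunct 1 must be strengthened as described above. Everything else is direct citation.
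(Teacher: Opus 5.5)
Your skeleton is the paper's own proof, which gets the corollary in one line from Theorem 1, Theorem 2 and Definition 6. Your uniqueness paragraph is fine: reading $[1,2:\texttt{\#};3:\texttt{\#}^\prime]$ as horizontal derivability for $\mathcal{D}^-=\{1,2\}$, $\mathcal{D}^+=\{3\}$ is correct, and the converse follows by swapping names. You are also right that Lemmata 1--3 as stated do not by themselves force every component to be non-empty. For example, $[1:p,p,p]$ has complexity $3$ and contains only copies of a primitive, yet it is not a theorem of $\{\textsc{Id},\textsc{Cut}\}$.

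The repair you propose, however, does not work. The induction has to run over derivations in $\{\textsc{Id},\textsc{Cut}\}\cup\mathfrak{R}_\#$, but you only treat \textsc{Id} and \textsc{Cut}, and the \#-rules do empty components. For instance, $1-$ applied to $[1,2,3:A]$ yields $[1:-A,A;2:A]$, which is the first step of the paper's proof of Theorem 1. So ``every component stays inhabited'' is false in the extended system. Restricting the invariant to \#-free sequents is not inductive either. Cutting $[1:-A,A;2:A]$ against $[2:A;3:-A,A]$ (obtained from \textsc{Id} by $3-$) gives the \#-free sequent $[1:A;2:A,A;3:A]$, yet each premiss has an empty component. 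Even for \textsc{Cut} alone, what preserves inhabitation is that both premisses have all three components non-empty, not that they have complexity $\geq 3$.

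One way to close the gap is a soundness argument. Take valuations into $\{1,2,3\}$, where a sequent holds iff some formula in component $d$ takes value $d$. Evaluate \#-formulae by the \textbf{K3} tables: $v(-A)=4-v(A)$, $v(A\wedge B)=\min(v(A),v(B))$, $v(A\vee B)=\max(v(A),v(B))$, $v(A\supset B)=\max(4-v(A),v(B))$. Formulae not headed by \# receive arbitrary values. All rules of Tables 2 and 4 are sound for these valuations. A \#-free sequent made of copies of one formula $A$ with an empty $d$th component fails under $v(A)=d$, so it is underivable. Conversely to Lemma 1, any sequent consisting of at least one copy of $A$ in each component, and nothing else, is base-provable. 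Cutting an occurrence of $A$ in dimension $i$ against \textsc{Id} on dimension $j$ adds exactly one copy of $A$ in the remaining dimension.

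If you simply cite Theorem 2, as the paper does, none of this is needed. But then the flawed induction should not be presented as the crux of your proof.
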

\begin{proof}
    From Theorems 1 and 2, and Definition 6. \hfill $\blacksquare$
\end{proof}
By establishing definability in our minimal derivability relation, we have generated the proofs to which we can now apply our inference behaviour function and obtain semantic clauses for the \textbf{K3} connectives.

\section{Results: Connective Meaning in \textbf{K3} and \textbf{LP}}

We now measure the inference-behaviour semantics in our definability proofs to assess which \textbf{K3} connectives share meaning. Based on this, we give the semantic clause for each \textbf{K3} connective with distinct meaning. We show that our results are also immediately applicable to \textbf{LP}, resulting in identical connective meanings in \textbf{K3} and \textbf{LP}. Based on this, we prove that the \textbf{K3}/\textbf{LP}-connectives are meaning-extensions of the connectives in \textbf{LK}.
\begin{theorem}\

    \noindent Each multiplicative and each additive \textbf{K3} connective has a distinct meaning, and \textbf{K3}-negation ($-$) has two distinct meanings.
\end{theorem}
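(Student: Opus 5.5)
By Definition 7, two connectives share a meaning iff they have the same inference behaviour in their definability proofs relative to $\{\textsc{Id},\textsc{Cut}\}$, up to substitution of connective symbols. To show that meanings are distinct, it therefore suffices to exhibit, for each pair of connectives, a difference in their inference behaviour. I will read this off the proofs of Theorems 1 and 2 (with Lemmata 1--3), using $\mathtt{ib}$ from Definition 8. For each connective I will compute the behaviour at every vertical location: the multiset $H$ of dimensions in which the connective or its active formulae occur, and the rule $R$ that introduces it. I then compare these sequences, or the coarser profiles $\mathtt{ibp}$: differing profiles already imply differing behaviour.

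The key steps are as follows.

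First, I compare connectives with different main operators. Here the \texttt{R}-components already differ as rules once the connective symbols are normalised. For instance, $1\wedge$ has one premiss with two active formulae in dimension 1, whereas $1\vee$ has two premisses and $1\supset$ has active formulae in dimensions 3 and 1. The $H$-multisets at the premiss locations are therefore different, e.g.\ $[1,1]$ versus $[1]$,$[1]$ versus $[3]$,$[1]$.

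Second, I compare the multiplicative and additive versions of the same connective. Here I will point to the differing shapes of the uniqueness proofs in Theorem 1 and Figure 1. The multiplicative proofs use a single application per dimension, with combined contexts. The additive proofs use split rules ($1\wedge1$, $1\wedge2$, $2\wedge1$, $2\wedge2$, \dots). As a result they contain more vertical locations and locations whose $H$-value is a single occurrence where the multiplicative proof has $[1,1]$ or $[2,2]$. The sequences $\mathtt{ib}$ therefore differ in length and in their entries. The same comparison also has to be run on the conservativity data: Lemma 2 distinguishes the complexity behaviour of $1\wedge$ multiplicatively (it cannot be a conclusion of low complexity) from additively (complexity preserved). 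This confirms that the difference is not an artefact of a particular choice of proof.

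Third, I treat negation. Theorem 1 gives two non-equivalent uniqueness proofs, differing in the order of application of $1-$ and $3-'$. Their inference behaviours differ: in one, the first location after \textsc{Id} records $H=[1]$ with $R=1-$; in the other, it records $H=[3]$ with $R=3-'$. This mirrors the \textbf{LK} split into $\neg$ and $\sim$ in Table 3. I will argue that both proofs are minimal definability proofs, so both measurements count, and each yields its own semantic clause. This gives two meanings for $-$.

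The main obstacle is making non-identity robust. Inference behaviour is measured on \emph{a} definability proof, so I must rule out that some alternative proof of uniqueness for, say, additive $\wedge$ reproduces the multiplicative behaviour. My first attempt will be an invariant argument. Any uniqueness proof must introduce the $\#$-formula in each of the dimensions 1 and 2 and the $\#'$-formula in dimension 3. The rule schemata then force the premiss structure, and hence the $H$-multisets, at the introducing locations: two active formulae together versus split. I will also check that \textsc{Cut} cannot merge these patterns; Lemma 1 restricts \textsc{Cut} among identity-derived sequents to repeated single formulae. For negation, I expect the same invariant argument to show that the two orderings are genuinely distinct and cannot be reduced to one another.
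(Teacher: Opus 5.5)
Your core plan is the paper's proof. You measure $\mathtt{ib}$ on the uniqueness proofs of Theorem 1 (with $\#:=\#^{\prime}$), compare the resulting lists pairwise, and obtain the two negations from the two orders of $1-$ and $3-^{\prime}$. The distinguishers that actually carry the argument are ones you also name: the $R$-entries differ as rule schemata, the lists have different lengths ($4$, $7$, $15$), and the two negation proofs apply their rules in different orders.

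Several supporting details are wrong, however, and should be corrected or dropped.

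\emph{The $H$-values.} $H(\#,V)$ collects the dimensions of \emph{all} tracked occurrences in the whole sequent at $V$, not the active formulae of a single rule schema. So values like $[1,1]$, $[3],[1]$, or $H=[1]$ versus $H=[3]$ never arise. Every leaf has $H=[1,2,3]$. Every node of the additive $\wedge$- and $\vee$-proofs also has $H=[1,2,3]$, so there is no `single occurrence' contrast with the multiplicative proofs. The negation node $[2:A;3:A,-^{\prime}A]$ has $H=[2,3,3]$. Computed correctly, $H$ alone cannot separate multiplicative $\wedge$ from multiplicative $\supset$: both give $[1,2,3]^{4},[1,1,2,2,3],[1,1,2,3],[1,2,3]$. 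Nor can it separate additive $\wedge$ from additive $\vee$. Only $R$ does, e.g.\ $3\wedge,2\wedge,1\wedge$ versus $1\supset,2\supset,3\supset$.

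\emph{The conservativity material.} This cannot serve as data. $\mathtt{ib}$ is defined only on object-level proof trees, and the paper notes that the proof of Theorem 2 via Lemmata 1--3 is meta-level and produces none. Only Theorem 1's proofs are measured.

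\emph{The robustness question.} The paper does not address it; it simply measures the displayed proofs. Your invariant argument is also the wrong tool for negation. Features shared by all uniqueness proofs of $-$ cannot separate two such proofs. Worse, if meaning were tied to such invariants, $-$ would receive one meaning, not two. The second negation meaning rests solely on the second displayed proof having a different $\mathtt{ib}$ list.
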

\begin{proof}
    We measure the inference behaviour in the proof of definability relative to $\{\textsc{Id,Cut}\}$. Note that the meta-logical proof of Theorem 2 does not establish any proofs in the object-logic---i.e. prove theorems of $\{1\#,2\#,3\#,\textsc{Id,Cut}\}$---but merely provides an analysis of the form $\{1\#,2\#,3\#,\textsc{Id,Cut}\}$-proofs can(not) take. Thus, it suffices to apply $\texttt{ib}$ to the $\{1\#,2\#,3\#,\textsc{Id,Cut}\}$-proofs in Theorem 1.
    
    Let $\mathfrak{p}$ be the proof of horizontal inter-derivability for connective $\#$. In virtue of Theorem 1, $\#$ and $\#^\prime$ are inter-definable, i.e. their formulae are horizontally inter-derivable. We can simply stipulate $\#:=\#^\prime$.

    If $\#=-$, $\texttt{ib}_\mathfrak{p}^-=\langle\langle[1,2,3],\textsc{Id}\rangle,\langle[1,1,3],1-\rangle,\langle[1,1,3],3-\rangle,\langle[1,2,3],2-\rangle\rangle$, or $\texttt{ib}_\mathfrak{p}^-=\langle\langle[1,2,3],\textsc{Id}\rangle,\langle[2,3,3],3-\rangle,\langle[2,3,3],1-\rangle,\langle[1,2,3],2-\rangle\rangle$.
    
    Let \# be multiplicative. If $\#=\wedge$, $\texttt{ib}_\mathfrak{p}^\wedge=\langle\langle[1,2,3],\textsc{Id}\rangle, \langle[1,2,3],\textsc{Id}\rangle, \langle[1,2,3],\textsc{Id}\rangle,$ $ \langle[1,2,3],\textsc{Id}\rangle, \langle[1,1,2,2,3],3\wedge\rangle,\langle[1,1,2,3],2\wedge\rangle,$ $\langle[1,2,3],1\wedge\rangle\rangle$.

    If $\#=\vee$, $\texttt{ib}_\mathfrak{p}^\vee=\langle\langle[1,2,3],\textsc{Id}\rangle, \langle[1,2,3],\textsc{Id}\rangle, \langle[1,2,3],\textsc{Id}\rangle,$ $ \langle[1,2,3],\textsc{Id}\rangle, \langle[1,2,2,3,3],$\linebreak $ 1\vee\rangle,\langle[1,2,3,3],2\vee\rangle,\langle[1,2,3],3\vee\rangle\rangle$.

    If $\#=\ \supset$, $\texttt{ib}_\mathfrak{p}^\supset=\langle\langle[1,2,3],\textsc{Id}\rangle, \langle[1,2,3],\textsc{Id}\rangle, \langle[1,2,3],\textsc{Id}\rangle, \langle[1,2,3],\textsc{Id}\rangle, \langle[1,1,2,2,3],$\linebreak $1\supset\rangle,\langle[1,1,2,3],2\supset \rangle,$ $\langle[1,2,3],3\supset \rangle\rangle$.

    Let \# be additive. If $\#=\wedge$, $\texttt{ib}_\mathfrak{p}^\wedge=\langle\langle[1,2,3],\textsc{Id}\rangle, \langle[1,2,3],\textsc{Id}\rangle, \langle[1,2,3],\textsc{Id}\rangle,$ $ \langle[1,2,3],\textsc{Id}\rangle, \langle[1,2,3],\textsc{Id}\rangle, \langle[1,2,3],\textsc{Id}\rangle,
    \langle[1,2,3],1\wedge1\rangle, \langle[1,2,3],1\wedge2\rangle, \langle[1,2,3],1\wedge1\rangle, \langle[1,2,3],1\wedge1\rangle, \langle[1,2,3],1\wedge2\rangle, \langle[1,2,3],1\wedge2\rangle, \langle[1,2,3],2\wedge1\rangle,\langle[1,2,3],2\wedge2\rangle,$\linebreak $\langle[1,2,3],3\wedge\rangle\rangle$.

    If $\#=\vee$, $\texttt{ib}_\mathfrak{p}^\vee=\langle\langle[1,2,3],\textsc{Id}\rangle, \langle[1,2,3],\textsc{Id}\rangle, \langle[1,2,3],\textsc{Id}\rangle, \langle[1,2,3],\textsc{Id}\rangle, \langle[1,2,3],$\linebreak $\textsc{Id}\rangle, \langle[1,2,3],\textsc{Id}\rangle,
    \langle[1,2,3],3\vee1\rangle, \langle[1,2,3],3\vee2\rangle, \langle[1,2,3],3\vee1\rangle, \langle[1,2,3],3\vee1\rangle,$\linebreak $ \langle[1,2,3],3\vee2\rangle, \langle[1,2,3],3\vee2\rangle, \langle[1,2,3],2\vee1\rangle,\langle[1,2,3],2\vee2\rangle, \langle[1,2,3],1\vee\rangle\rangle$.

    If $\#=\ \supset$, $\texttt{ib}_\mathfrak{p}^\supset=\langle\langle[1,2,3],\textsc{Id}\rangle, \langle[1,2,3],\textsc{Id}\rangle, \langle[1,2,3],\textsc{Id}\rangle, \langle[1,2,3],\textsc{Id}\rangle, \langle[1,2,3],$\linebreak $\textsc{Id}\rangle, \langle[1,2,3],\textsc{Id}\rangle,
    \langle[2,3,3],3\supset1\rangle, \langle[1,2,3],3\supset2\rangle, \langle[2,3,3],3\supset1\rangle, \langle[2,3,3],3\supset1\rangle,$\linebreak $ \langle[1,2,3],3\supset2\rangle, \langle[1,2,3],3\supset2\rangle, \langle[2,3,3],2\supset1\rangle,\langle[1,2,3],2\supset2\rangle, \langle[1,2,3],1\supset\rangle\rangle$.

    Checking the $\texttt{ib}$ of any two additive/multiplicative connectives $\#_1, \#_2$ amongst the ones we have just considered (including the two different versions of negation), we find that $\texttt{ib}_\mathfrak{p}^{\#_1}\neq \texttt{ib}_\mathfrak{p}^{\#_2}$. Therefore, the theorem holds by virtue of Definition 8. \hfill $\blacksquare$
\end{proof}
We can see that, as was the case for \textbf{LK}, we conflate semantically distinct connectives with different meanings (e.g. additive/multiplicative connectives, two different negations) if we add them to fully structural \textbf{K3}. To help us distinguish connectives in the fine-grained semantic sense from their coarse-grained behaviour in the fully structural calculus, we make use of our semantic language $\mathcal{L}_\textbf{Cal}^S$. In $\mathcal{L}_\textbf{Cal}^S$, each semantically distinct connective has exactly one name in line with the terminology used earlier for the semantics of \textbf{LK}. We call one version of negation `$\neg$' (with $\texttt{ib}_\mathfrak{p}^-$ as given in the first disjunct of the last proof) and the other one `$\sim$' (with $\texttt{ib}_\mathfrak{p}^-$ as given in the second disjunct), and we refer to multiplicative conjunction/disjunction/implication as `$\otimes$'/`$\oplus$'/`$\rightarrow$' and its additive sister as `$\sqcap$'/`$\sqcup$'/`$\rightsquigarrow$'.

We can easily find the semantic clause of each semantically distinct \textbf{K3}-connective by looking at its \textit{inference behaviour profile} $\texttt{ibp}$. Following the same order as in the proof of the preceding theorem: $\texttt{ibp}_\mathfrak{p}^\neg=\{[1,2,3],[1,1,3],\textsc{Id}, 1\neg, 2\neg, 3\neg\}$, $\texttt{ibp}_\mathfrak{p}^\sim=\{[1,2,3],[2,3,3],\textsc{Id}, 1\sim, 2\sim, 3\sim\}$, $\texttt{ibp}_\mathfrak{p}^\otimes=\{[1,2,3],[1,1,2,3],[1,1,2,2,3],\textsc{Id}, 1\otimes,$\linebreak $ 2\otimes, 3\otimes\}$, $\texttt{ibp}_\mathfrak{p}^\oplus=\{[1,2,3], [1,2,3,3],[1,2,2,3,3],\textsc{Id}, 1\oplus, 2\oplus, 3\oplus\}$, $\texttt{ibp}_\mathfrak{p}^\rightarrow=\{[1,2,3],$\linebreak $[1,1,2,3],[1,1,2,2,3],\textsc{Id}, 1\rightarrow, 2\rightarrow, 3\rightarrow\}$, $\texttt{ibp}_\mathfrak{p}^\sqcap=\{[1,2,3],\textsc{Id}, 1\sqcap, 2\sqcap, 3\sqcap\}$, $\texttt{ibp}_\mathfrak{p}^\sqcup=\{[1,2,3],\textsc{Id}, 1\sqcup, 2\sqcup, 3\sqcup\}$, $\texttt{ibp}_\mathfrak{p}^\rightsquigarrow=\{[1,2,3],[2,3,3],\textsc{Id}, 1\rightsquigarrow, 2\rightsquigarrow, 3\rightsquigarrow\}$. 

To find the semantic clause for a connective (cf. \S 3), we must find the operational rules that only have the structural properties needed to obtain the inference behaviour recorded in Theorem 3. 
\begin{example}
    Let us find the semantic clause for $\otimes$. Only $1\otimes$, $2\otimes$, $3\otimes, \textsc{Id}\in \texttt{ibp}^\otimes_\mathfrak{p}$ and, thus, each rule used in the horizontal-interderivability proof $\mathfrak{p}$ is either an operational $\otimes$-rule, or part of our minimal derivability relation. Hence, it suffices to consider possible ways to minimise the $\otimes$-rules whilst preserving $\texttt{ib}^\otimes_\mathfrak{p}$. 
    
    If we were to change \textit{active formulae} compared to the multiplicative $\wedge$-rules, we would also alter $H(\#,V_i)$ for some $V_i\in \mathfrak{p}$ and, \textit{a fortiori}, $\texttt{ib}^\otimes_\mathfrak{p}$. The same reasoning applies if we tried to change the multiplicative treatment of contexts when passing the inference line. Hence, our $\otimes$-rules must have the same active formulae and multiplicative contextualisation as the general multiplicative  $\wedge$-rules of \textbf{K3}.
    
    However, what set-theoretic structure should the contexts be in the semantic clause? $\texttt{ibp}_\mathfrak{p}^\otimes$ tells us that only three different constellations of set theoretic frameworks occur in $\texttt{ib}_\mathfrak{p}^\otimes$: one formula per dimension ($[1,2,3]$), two formulae in the first and one formula in the second and third dimension ($[1,1,2,3]$), and two formulae in the first dimension, two in the second, and one in the third ($[1,1,2,2,3]$). Hence, our semantic clauses need to have space for 1 or 2 formulae in the first, 1 or 2 formulae in the second, and 3 formulae in the third dimension.

    If we combine these constraints and capture them (and only them) in the form of substructural operational rules, we obtain the following semantic clause for $\otimes$:
    {\footnotesize \[ 
    \begin{prooftree}
        \hypo{[1:A,B; 2:C,D; 3:E]}
        \infer1[$1\otimes$]{[1:A\otimes B; 2:C,D; 3:E]}
    \end{prooftree} \qquad \begin{prooftree}
        \hypo{[1:C;2,3:A]}
        \hypo{[1:D;2,3:B]}
        \hypo{[1:C,D;2:A,B;3:E]}
        \infer3[$2\otimes$]{[1:C,D;2:A\otimes B;3:E]}
    \end{prooftree}
    \]
    \[
    \begin{prooftree}
        \hypo{[1:C;2:D;3:A]}
        \hypo{[1:E;2:F;3:B]}
        \infer2[$3\otimes$]{[1: C,E; 2:D,F;3:A\otimes B]}
    \end{prooftree}
    \]}
\end{example} 
In Table 5, the semantic clauses of the remaining \textbf{K3}-connectives are displayed.
\begin{sidewaystable}[htbp]
\fbox{
\parbox{\linewidth}{
    \centering
    \begin{center}
    \scriptsize
    \begin{tabular}{c c c}
    \begin{prooftree}
        \hypo{[1:C;2:D;3:A]}
        \infer1[1$\neg$]{[1:\neg A, C;2:D]}
    \end{prooftree} 
    &
    \begin{prooftree}
        \hypo{[1:C;2:A;3:D]}
        \infer1[2$\neg$]{[1:C; 2:\neg A; 3:D]}
    \end{prooftree}
    &
    \begin{prooftree}
        \hypo{[1:A,C;2:D]}
        \infer1[3$\neg$]{[1:C;2:D;3:\neg A]}
    \end{prooftree}
    \\ \\
        \begin{prooftree}
        \hypo{[2:C; 3:A, D]}
        \infer1[1$\sim$]{[1:\ \sim A; 2:C; 3: D]}
    \end{prooftree} 
    &
    \begin{prooftree}
        \hypo{[1:C;2:A;3:D]}
        \infer1[2$\sim$]{[1:C; 2:\ \sim A; 3:D]}
    \end{prooftree}
    &
    \begin{prooftree}
        \hypo{[1:A;2:C;3:D]}
        \infer1[3$\sim$]{[2:C; 3:\ \sim A, D]}
    \end{prooftree} 
    \\  \\ 
    \begin{prooftree}
        \hypo{[1:A,B; 2:C,D; 3:E]}
        \infer1[$1\otimes$]{[1:A\otimes B; 2:C,D; 3:E]}
    \end{prooftree} & \multicolumn{2}{c}{\begin{prooftree}
        \hypo{[1:C;2,3:A]}
        \hypo{[1:D;2,3:B]}
        \hypo{[1:C,D;2:A,B;3:E]}
        \infer3[$2\otimes$]{[1:C,D;2:A\otimes B;3:E]}
    \end{prooftree}}
    \\ \\
    \multicolumn{3}{c}{\begin{prooftree}
        \hypo{[1:C;2:D;3:A]}
        \hypo{[1:E;2:F;3:B]}
        \infer2[$3\otimes$]{[1: C,E; 2:D,F;3:A\otimes B]}
    \end{prooftree}} 
    \\ \\
    \begin{prooftree}
        \hypo{[1:A; 2:C; 3:D]}
        \infer1[$1\sqcap$1]{[1:A\sqcap B; 2:C;3:D]}
    \end{prooftree} \ 
    \begin{prooftree}
        \hypo{[1:B; 2:C; 3:D]}
        \infer1[$1\sqcap$2]{[1:A\sqcap B; 2:C;3:D]}
    \end{prooftree}
    &
    \multicolumn{2}{c}{
    \begin{prooftree}
        \hypo{[1:C;2:A;3:A, D]}
        \hypo{[1:C;2:B;3:B, D]}
        \hypo{[1:C;2:A;3:D]}
        \infer3[2$\sqcap$1]{[1:C;2:A\sqcap B;3:D]}
    \end{prooftree}}
    \\ \\
    \multicolumn{2}{c}{
    \begin{prooftree}
        \hypo{[1:C;2:A;3:A, D]}
        \hypo{[1:C;2:B;3:B, D]}
        \hypo{[1:C;2:B;3:D]}
        \infer3[2$\sqcap$2]{[1:C;2:A\sqcap B;3:D]}
    \end{prooftree}}
    &
    \begin{prooftree}
        \hypo{[1:C;2:D;3:A]}
        \hypo{[1:C;2:D;3:B]}
        \infer2[$3\sqcap$]{[1: C;2:D; 3:A\sqcap B]}
    \end{prooftree}
    \\ \\
    \begin{prooftree}
        \hypo{[1:A;2:C;3:D]}
        \hypo{[1:B;2:E;3:F]}
        \infer2[$1\oplus$]{[1:A\oplus B;2:C,E;3:D,F]}
    \end{prooftree}
    &
    \multicolumn{2}{c}{\begin{prooftree}
        \hypo{[1,2:A;3:C]}
        \hypo{[1,2:B;3:D]}
        \hypo{[1:E;2:A,B;3:C,D]}
        \infer3[$2\oplus$]{[1:E;2:A\oplus B; 3:C,D]}
    \end{prooftree}
    }
    \\
    \multicolumn{3}{c}{\begin{prooftree}
        \hypo{[1:C;2:D,E;3:A,B]}
        \infer1[$3\oplus$]{[1:C;2:D,E; 3:A\oplus B]}
    \end{prooftree}}
    \\ \\
    \begin{prooftree}
        \hypo{[1:A;2:C;3:D]}
        \hypo{[1:B;2:C;3:D]}
        \infer2[$1\sqcup$]{[1:A\sqcup B;2:C;3:D]}
    \end{prooftree}
    &
        \multicolumn{2}{c}{
    \begin{prooftree}
        \hypo{[1:A, C;2:A;3:D]}
        \hypo{[1:B, C;2:B;3:D]}
        \hypo{[1:C;2:A;3:D]}
        \infer3[2$\sqcup$1]{[1:C;2:A\sqcup B;3:D]}
    \end{prooftree}}
    \\ \\
    \multicolumn{2}{c}{
    \begin{prooftree}
        \hypo{[1:A, C;2:A;3:D]}
        \hypo{[1:B, C;2:B;3:D]}
        \hypo{[1:C;2:B;3:D]}
        \infer3[2$\sqcup$2]{[1:C;2:A\sqcup B;3:D]}
    \end{prooftree}}
    &
    \begin{prooftree}
        \hypo{[1:C;2:D;3:A]}
        \infer1[$3\sqcup$1]{[1:C;2:D;3:A\sqcup B]}
    \end{prooftree} \
    \begin{prooftree}
        \hypo{[1:C;2:D;3:B]}
        \infer1[$3\sqcup$2]{[1:C;2:D; 3:A\sqcup B]}
    \end{prooftree}
    \\ \\
    \begin{prooftree}
        \hypo{[1:C;2:D;3:A]}
        \hypo{[1:B;2:E;3:F]}
        \infer2[1$\rightarrow$]{[1:A\rightarrow B, C;2:D,E;3:F]}
    \end{prooftree}
    &
    \multicolumn{2}{c}{
    \begin{prooftree}
        \hypo{[1:C;2,3:A]}
        \hypo{[1:B,E;2:B;3:D]}
        \hypo{[1:C,E;2:A,B;3:D]}
        \infer3[$2\rightarrow$]{[1:C,E;2:A\rightarrow B; 3:D]}
    \end{prooftree}
    }
    \\ \\
    \multicolumn{3}{c}{
    \begin{prooftree}
        \hypo{[1:A,C;2:D;3:B]}
        \infer1[3$\rightarrow$]{[1:C;2:D;3:A\rightarrow B]}
    \end{prooftree}
    }
    \\ \\
       \begin{prooftree}
        \hypo{[2:D;3:A,C]}
        \hypo{[1:B;2:D;3:C]}
        \infer2[1$\rightsquigarrow$]{[1:A\rightsquigarrow B;2:D;3:C]}
    \end{prooftree}
    &
    \multicolumn{2}{c}{
    \begin{prooftree}
        \hypo{[2:A;3:A,D]}
        \hypo{[1:B;2:B;3:D]}
        \hypo{[1:C;2:A;3:D]}
        \infer3[2$\rightsquigarrow$1]{[1:C;2:A\rightsquigarrow B;3:D]}
    \end{prooftree}}
    \\ \\
    \multicolumn{2}{c}{
    \begin{prooftree}
        \hypo{[1:C;2:A;3:D]}
        \hypo{[1:B,C;2:B;3:D]}
        \hypo{[1:C;2:B;3:D]}
        \infer3[2$\rightsquigarrow$2]{[1:C;2:A\rightsquigarrow B;3:D]}
    \end{prooftree}}
    &
    \begin{prooftree}
        \hypo{[1:A;2:C;3:D]}
        \infer1[3$\rightsquigarrow$1]{[2:C;3:D, A\rightsquigarrow B]}
    \end{prooftree} \
        \begin{prooftree}
        \hypo{[1:B;2:C;3:D]}
        \infer1[3$\rightsquigarrow$2]{[2:C;3:D, A\rightsquigarrow B]}
    \end{prooftree}
    \end{tabular}
\end{center}
}}
    \caption{semantic clauses, \textbf{K3} and \textbf{LP}}
    \label{tab:placeholder}
\end{sidewaystable}

Let us briefly step back and consider what would have happened, had we not imposed our pragmatic context restrictions in Tables 2 and 4 (see $^\ast,^\dag,^\ddag$). Whilst Theorem 3 would still obtain, we would have found different inference behaviour for all connectives except $-$. This is because uniqueness (Theorem 1) would have failed for the multiplicative connectives unless we had added \textsc{Contraction} to our minimal derivability relation. Similarly, we could not have proven uniqueness for the additive connectives, unless \textsc{Weakening} rules were present.\footnote{We at least need some \textit{restricted version} of \textsc{Contraction}/\textsc{Weakening}.}
\begin{example}
    Let $\mathfrak{p}$ be the horizontal interderivability proof for contextually unrestricted multiplicative conjunction $\wedge_{u}$:
    \[
    \begin{prooftree}
        \hypo{[1,2,3:A]}
        \hypo{[1,2,3:B]}
        \hypo{[1,2,3:A]}
        \hypo{[1,2,3:B]}
        \infer2[3$\wedge_{u}^\prime$]{[1,2: A,B; 3:A\wedge_{u}^\prime B]}
        \infer3[2$\wedge_{u}$]{[1:A,A,B,B;2:A\wedge_{u} B; 3: A\wedge_{u}^\prime B]}
        \infer1[1$\wedge_{u}$]{[1:A, B,A\wedge_{u} B;2: A\wedge_{u} B; 3: A\wedge_{u}^\prime B]}
        \infer1[1$\wedge_{u}$]{[1:A\wedge_{u} B,A\wedge_{u} B;2: A\wedge_{u} B; 3: A\wedge_{u}^\prime B]}
        \infer1[\textsc{1C}]{[1,2:A\wedge_{u} B; 3: A\wedge_{u}^\prime B]}
    \end{prooftree}
    \]
    Thus, $\texttt{ib}_\mathfrak{p}^{\wedge_{u}}=\langle\langle[1,2,3],\textsc{Id}\rangle, \langle[1,2,3],\textsc{Id}\rangle, \langle[1,2,3],\textsc{Id}\rangle,$ $ \langle[1,2,3],\textsc{Id}\rangle, \langle[1,1,2,2,3],$\linebreak $3\wedge_{u}\rangle,\langle[1,1,1,1,2,3],2\wedge_{u}\rangle,\langle[1,1,1,2,3],2\wedge_{u}\rangle,\langle[1,1,2,3],2\wedge_{u}\rangle,$ $\langle[1,2,3],1C\rangle\rangle$, and $\texttt{ibp}_\mathfrak{p}^{\wedge_{u}}=\{[1,2,3],[1,1,2,3],[1,1,1,2,3],[1,1,2,2,3],[1,1,1,1,2,3],1\wedge_{u},2\wedge_{u},3\wedge_{u},$\linebreak $\textsc{Id},1C\}$.

    Again, the semantic clause for $\wedge_{u}$ would take the shape of a minimal triple of operational rules. Hence, we must absorb all structural rules that exceed the resources of our minimal derivability relation (here: $1C$) into the operational rules akin to the absorption of operational rules in $\textbf{G3}$-style systems \cite{dragalin1979matematicheskij, kleene1952introduction}. This can be achieved by modifying the $2\wedge_u$-rule such that its pure premises, $[1,2,3:A]$ and $[1,2,3:B]$, share parametric formulae in the first dimension with its mixed premiss, $[1,2:A,B;3:A\wedge^\prime_u B]$. This prevents the duplications of $A,B$ in the first dimension of the conclusion of $2\wedge_u$ as we instead obtain $[1:A,B;2:A\wedge_u B;3:A\wedge^\prime_u B]$. Hence, no more use of $1C$ is required in the proof thus modified.

    This context restriction is exactly the one we imposed in Table 2, turning $\wedge_u$ back into $\wedge$. Hence, the semantic clause of $\wedge_u$ in $\{\textsc{Id, Cut}\}$ is that of $\otimes$. At first glance, this appears counter-intuitive as we measured different \texttt{ib} for $\wedge_u$ than for $\otimes$. However, we did not measure \texttt{ib} for $\wedge_u$ in $\{\textsc{Id, Cut}\}$ but in $\{\textsc{Id, Cut}, 1C\}$. By tracking $\wedge_u$'s \texttt{ib} in its \textit{more-than-minimal} definability proof in $\{\textsc{Id, Cut}, 1C\}$, we can find the context restrictions needed to find its \textit{minimal} definablility proof in our minimal derivability relation, and thus the meaning of $\wedge_{(u)}$, even if unrestricted $\wedge_u$ was not originally definable in $\{\textsc{Id, Cut}\}$.

    The same reasoning applies to the other unrestricted connectives. All relevant \textit{more-than-minimal} definability proofs can be found in the Appendix.
\end{example}
Returning to our main discussion, we find that the \textbf{LP} connectives have the same semantic clauses for the connectives as their \textbf{K3}-counterparts. The 3-dimensional calculus for the Logic of Paradox \textbf{LP} \parencite{multlog2024lp} is proof-theoretically identical to \textbf{K3} at the level of \textit{vertical} derivability properties of connective rules (see Tables 2,4). Their difference only comes into play at the level of \textit{horizontal} derivability: whilst \textbf{K3} only designates the third dimension ($\mathcal{D}^+=\{3\}$), \textbf{LP} designates both the second and third dimension ($\mathcal{D}^+=\{2,3\}$) as we discussed in Example 1. Subsequently, all our previous results about the \textbf{K3}-connectives apply equally to the \textbf{LP}-connectives.\begin{theorem}\

    \noindent
    The connectives in \textbf{K3} and the connectives in \textbf{LP} mean the same.
\end{theorem}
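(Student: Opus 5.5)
The plan is to reduce the claim to Definition 8: two connectives mean the same iff they have the same inference behaviour in the proof of their definability relative to the fixed minimal derivability relation $\{\textsc{Id},\textsc{Cut}\}$. So it suffices to show that, for each \textbf{K3} connective $\#$ (in its multiplicative and additive guise, and for both negations), the corresponding \textbf{LP} connective has a definability proof whose inference behaviour is identical to $\texttt{ib}^\#_\mathfrak{p}$ as computed in the proof of Theorem 3.

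First, I would record that the \textbf{LP} calculus consists of exactly the same rules as \textbf{K3} (Tables 2 and 4, with the same context restrictions $^\ast,^\dag,^\ddag$). It also has the same dimension set $\mathcal{D}=\{1,2,3\}$ and the same minimal derivability relation of $\textsc{Id}^3$ and $\textsc{Cut}^2$. The only difference is the designation $\mathcal{D}^+=\{2,3\}$ rather than $\{3\}$ (Example 1). Conservativity (Theorem 2, via Lemmata 1--3) is a statement purely about vertically provable sequents of $\{\textsc{Id},\textsc{Cut}\}\cup\mathfrak{R}_\#$ and makes no reference to designation. Hence it transfers verbatim to \textbf{LP}.

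Second, and this is the step I expect to be the main obstacle, uniqueness is formulated horizontally, so it does depend on $\mathcal{D}^+$. For \textbf{LP} one needs $[1:\texttt{\#};2,3:\na]$ rather than $[1,2:\texttt{\#};3:\na]$. I would handle this in two moves. First, I would exploit the symmetry of Definition 6(2), which requires derivability in both directions: the \textbf{K3} proofs in Theorem 1 and Figure 1 establish $[1,2:\texttt{\#};3:\texttt{\#}']$, and by the substitution symmetry between $\#$ and $\#'$ the same trees yield the converse. Second, I would construct the \textbf{LP}-target sequents $[1:\texttt{\#};2,3:\texttt{\#}']$ by permuting the order of the final rule applications in the Theorem 1 trees. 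Concretely, I would apply the dimension-2 rule to the primed connective instead of the unprimed one while keeping all premisses and \textsc{Id} leaves unchanged. The multiset of dimensions at each vertical location then stays the same, and the rule labels differ only by the renaming $\#\leftrightarrow\#'$.

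Since Definition 8 identifies meanings up to substitution of connective symbols, and the inference-behaviour function tracks $\#$-formulae together with their active formulae irrespective of priming, the resulting $\texttt{ib}$ lists coincide with those computed in Theorem 3. The inference-behaviour profiles therefore coincide as well, and so do the semantic clauses of Table 5 derived from them.

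Finally, I would conclude that each \textbf{LP} connective has the same inference behaviour in its definability proof as its \textbf{K3} counterpart, so by Definition 8 they mean the same. If the rearrangement in the second step fails for some additive case (e.g.\ $2\supset$, where the context restriction $^\ddag$ is asymmetric), my fallback is the paper's own framing. The semantic clause is extracted from the vertical rule structure, which is shared by the two calculi, so designation only selects which horizontal relation is read off and does not alter $\texttt{ib}$.
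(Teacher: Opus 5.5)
Your proposal is correct and follows essentially the same route as the paper. Conservativity carries over unchanged because it never refers to designation. The \textbf{LP} uniqueness proofs come from those of Theorem 1 by letting the dimension-2 rule introduce $\#'$ instead of $\#$; the paper phrases this as replacing every second-dimension occurrence of $\#$ with $\#'$. After that, the stipulation $\#:=\#'$ from Theorem 3 makes the inference behaviour coincide, and Definition 8 applies.

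Your worry about $2\supset$ is unfounded, so the fallback is not needed. The relabelling changes only which connective symbol is principal, not the premisses or the context restriction of the rule.
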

\begin{proof}
    The uniqueness and conservativity proofs for \textbf{LP} can be obtained from Theorem 1 and 2, respectively, by substituting `\textbf{LP}' for `\textbf{K3}' and---in the proof of Theorem 1 only---by replacing all occurrences of `$\#$' in the second dimension of each sequent with `$\#^\prime$' for each \textbf{K3}/\textbf{LP}-connective $\#$. This is to adjust for differences in the designated dimensions of the derivability relation. As $\#=\#^\prime$ in the proof of Theorem 3, Theorem 3 also applies to \textbf{LP}. Therefore by Definition 8, each connective in \textbf{LP} has a counterpart in \textbf{K3} with the same meaning, and vice versa.\hfill $\blacksquare$
\end{proof}
We are now in a position to prove that the \textbf{K3}/\textbf{LP}-connectives extend the meaning of the \textbf{LK}-connectives according to our notion of meaning-extension.
\begin{theorem}\

    \noindent
    Each connective $\#$ in \textbf{K3}/\textbf{LP} extends the meaning of the corresponding connective $\natural$ in \textbf{LK}, where $\#$ and $\natural$ have the same connective symbol.
\end{theorem}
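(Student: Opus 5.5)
The plan is to verify Definition 9 (meaning-extension) directly, connective by connective, taking the \textbf{K3}/\textbf{LP} semantic clauses of Table 5 for $\#$ and the \textbf{LK} clauses of Table 3 for $\natural$. Here $m=3$ and $n=2$. For every connective I would use the same two-element subset of dimensions, $\mathcal{D}_o=\{1,3\}$, which the paper already anticipates in saying that the \textbf{LK}-clauses are contained in the first and third dimension of the \textbf{K3}-clauses. Dimension $1$ of \textbf{K3} is mapped to dimension $1$ of \textbf{LK}, and dimension $3$ to dimension $2$. This fits the multilateralist reading: assertion goes to assertion and denial to denial, and suspending or doubling judgement is the added speech act.

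For each $\#\in\{\neg,\sim,\otimes,\sqcap,\oplus,\sqcup,\rightarrow,\rightsquigarrow\}$ I would define the $2$-dimensional connective $\#^\prime$ in two steps. First, take the rules $1\#$ and $3\#$ of Table 5. Second, replace every $3$-dimensional sequent $[1:\Gamma_1;2:\Gamma_2;3:\Gamma_3]$ in them by its sub-sequent $[1:\Gamma_1;2:\Gamma_3]$, i.e. delete the second component and renumber. Clause 1 of Definition 9 then holds by construction. For example, $1\neg$ with premiss $[1:C;2:D;3:A]$ becomes the premiss $[1:C;2:A]$ with conclusion $[1:\neg A,C]$, which is exactly $1\neg$ of Table 3. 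Likewise $3\neg$ becomes $2\neg$ of Table 3, and $1\otimes$, $3\otimes$ become $1\otimes$, $2\otimes$ of Table 3 once the $D,F$ contexts in dimension $2$ are erased. I would check each of the sixteen rule pairs in this way.

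Clause 2 requires that $\#^\prime$ has the same meaning as $\natural$, i.e. by Definition 8 the same inference behaviour in its minimal definability proof. The strategy is to show that the projected rules coincide syntactically, up to renaming of context variables, with the \textbf{LK} semantic clauses of Table 3. Semantic clauses are determined by the inference behaviour, and I will rely on the \textbf{LK} results of \cite{nagler2026inference, nagler2026measuring} for those clauses. Identity of rules then yields identity of the definability proofs. Concretely, the uniqueness derivations obtained from Theorem 1's proofs, restricted to dimensions $1,3$ (the dimension-$2$ rules and the third premisses of the $2\#$ rules are dropped), are the standard \textbf{LK} uniqueness proofs. Hence they give the same $\texttt{ib}$, and meaning identity follows.

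The main obstacle is the row-by-row matching, and the mismatched cases need attention. For instance, after projection $3\sim$ has premiss $[1:A;2:D]$ and conclusion $[2:\sim A,D]$, which should match $2\sim$ of Table 3. Similarly the $1\rightsquigarrow$ and $3\rightsquigarrow$ projections must match Table 3, where \textbf{LK}'s $2\rightsquigarrow$ has the $A$-rule with $A$ in dimension $1$. Where a projected context is empty in one place and present in another, I would argue that this is only a renaming of context variables (e.g. $C,D$ versus $C$), so the rule is unchanged up to substitution. If a genuine discrepancy appeared, I would fall back on comparing inference-behaviour profiles restricted to dimensions $\{1,3\}$, and show they equal the \textbf{LK} profiles.
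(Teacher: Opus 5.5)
Your plan is the paper's own proof. You take $\mathcal{D}_o=\{1,3\}$ and build $\#^\prime$ by deleting $2\#$ and the second component of every sequent in $1\#$ and $3\#$. You then read off syntactic identity with Table 3 and conclude that identical clauses give identical definability proofs over $\{\textsc{Id},\textsc{Cut}\}$, hence identical $\texttt{ib}$. The paper merely asserts the row-by-row identity ``by close inspection''.

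The check goes through for $\neg,\sim,\otimes,\sqcap,\oplus,\sqcup,\rightarrow$. The $\rightsquigarrow$ rows you flagged are a genuine gap, though, and neither of your remedies closes it.
\begin{itemize}
\item Projecting Table 5 gives $1\rightsquigarrow$ as $[2:A,C],\,[1:B;2:C]\Rightarrow[1:A\rightsquigarrow B;2:C]$ and $3\rightsquigarrow 1$ as $[1:A;2:D]\Rightarrow[2:D,A\rightsquigarrow B]$.
\item Table 3 instead has $[1:C;2:A],\,[1:B,C]\Rightarrow[1:A\rightsquigarrow B,C]$ and $[1:A,C]\Rightarrow[1:C;2:A\rightsquigarrow B]$, so the context sits in the other component.
\item No renaming of context variables turns one into the other. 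Sequents are indexed by dimension, and $H$ records exactly which dimensions the tracked formulae occupy, so the rules and the $\texttt{ib}$ they induce differ.
\item Your fallback targets the wrong invariant. Definition 8 is stated for $\texttt{ib}$, not $\texttt{ibp}$, which forgets order and multiplicity. Clause 2 of Definition 9 concerns the $\texttt{ib}$ of the two-dimensional $\#^\prime$ in its own definability proof, not a restriction of a three-dimensional profile.
\end{itemize}

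What closes the case is your own shadowing idea, used to compute the \textbf{LK} clause rather than to confirm Table 3. When you shadow, drop each $2\#$ inference and continue from a premiss whose $\{1,3\}$-shadow equals that of its conclusion. In the displayed proofs that is the third premiss, so it is the one you keep, not the one you drop.

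Applied to the Figure 1 proof for $\supset$, this yields a uniqueness proof over $\{\textsc{Id},\textsc{Cut}\}$:
\begin{itemize}
\item $2\rightsquigarrow^\prime 1$ on $[1:A;2:A]$;
\item $2\rightsquigarrow^\prime 2$ on $[1:B;2:B]$;
\item then $1\rightsquigarrow$ with context $A\rightsquigarrow^\prime B$ in the second component.
\end{itemize}
The clause this proof delivers is exactly your projection.

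The left-context rules printed in Table 3, by contrast, admit no horizontal uniqueness proof at all. No sequent whose formulae all lie in the first component is derivable with them, so $1\rightsquigarrow$ never fires. The premiss $[1:A,A\rightsquigarrow B]$ needed to finish with $2\rightsquigarrow^\prime 1$ is therefore unreachable, and one checks the same for $2\rightsquigarrow^\prime 2$. So the mismatch is a defect of the printed row, and the comparison must be made against the recomputed clause.

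Likewise, read $3\rightsquigarrow 2$ off Figure 1, where its active $B$ is in dimension $3$, not off Table 5, which prints it in dimension $1$.
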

\begin{proof}
    \textbf{K3} and \textbf{LP} are defined on the set of dimensions $\mathcal{D}_\textbf{K3/LP}=\{1,2,3\}$ and \textbf{LK} is defined on $\mathcal{D}_\textbf{LK}=\{1,2\}$. Let $\#^\prime$ be a connective defined on the set of dimensions $\mathcal{D}^\prime=\{1,3\}$. Intuitively, we obtain the semantic clauses for $\#^\prime$ from the semantic clauses for $\#$ by deleting the $2\#$ rule and the second dimension of each sequent in $1\#, 3\#$. Put formally: for each $\#$ and for each $i\in \mathcal{D}^\prime$, the $i^{th}$ operational rule in the semantic clause for $\#^\prime$, $i\#^\prime$, is identical to the corresponding $i^{th}$ operational rule of the semantic clause for $\#$, $i\#$, except that each 3-dimensional sequent $[1:\Gamma_1; 2:\Gamma_2; 3:\Gamma_3]$ in $i\#$ is uniformly replaced with the 2-dimensional sequent $[1:\Gamma_1; 3:\Gamma_3]$ in $i\#^\prime$.

    As close inspection of Tables 5 and 3 reveals, the semantic clause for each $\#^\prime$ is syntactically identical to that of $\natural$ if $\#^\prime$ and $\natural$ have the same connective symbol (or: name). Trivially, $\#^\prime$ and $\natural$ have identical definability proof in $\{\textsc{Id,Cut}\}$, our minimal derivability relation, and, \textit{a fortiori}, the same \textit{inference behaviour} therein. Therefore, $\#^\prime$ and $\natural$ have the same meaning and the theorem holds in virtue of Definition 9.\hfill $\blacksquare$
\end{proof}

Let us now take stock and discuss the implications of these results in the context of our journey towards integrating I-bS into MUltlog.

\section{Discussion: Towards Automated Proof-Theoretic Semantics}
We have shown that I-bS can be extended to MUltlog's mathematical machinery of $n$-dimensional sequent calculi. We demonstrated how this extension can be done without losing I-bS's key feature: characterising the relationship of connective meanings across different logics. This lays the groundwork for integrating I-bS into a MUltlog-style system that generates semantic clauses for arbitrary multi-valued logics. Let us take stock of our results and discuss their suitability for the overarching project as well as the remaining challenges.

We first adjusted the definitional framework of I-bS to accommodate the $>2$-dimensional sequent structures generated by MUltlog in \S\S2,3. We consider these modifications technically organic and philosophically largely uncontroversial. Whilst our speech-act-based multilateralist sequent readings, which we extended to additional primitive speech-acts such as suspending or doubling judgement, might be somewhat contentious, we do not expect it to add much to the existing critiques---and responses thereto---of Restall-style and other bi- and multilateralist techniques in P-tS. Moreover, as we have noted, no technical results in this paper rely on a multilateralist sequent reading beyond motivating and interpretational factors. 

We then applied this expanded definitional framework to a case-study of 3-dimensional \textbf{K3} and \textbf{LP} in \S\S4,5. Note, however, that our definitional apparatus is not limited to 3-dimensional applications but is applicable to any finite-valued calculus generated by MUltlog. We found that, as in the 2-dimensional case \cite{nagler2026inference, nagler2026measuring}, we can map the relationship of connective meanings amongst different calculi. In particular, we found that the connectives in \textbf{K3} and \textbf{LP} have the same meaning. Hence, we showed how I-bS can account for different designated values in the same calculus. Whilst it might come as a surprise to less proof-theoretically inclined readers that a \textit{paracomplete} (\textbf{K3}) and a \textit{paraconsistent} (\textbf{LP}) logic have the same connective meanings, this is no novelty in the P-tS literature. Notably, \cite{hjortland2013logical} also finds and discusses the identity of connective meanings in 3-dimensional sequent calculi for \textbf{K3} and \textbf{LP}. However, unlike in the present discussion, the result in \cite{hjortland2013logical} does not result from a comprehensive semantic and meta-semantic theory of connective meaning but from a \textit{sameness} criterion in the style of \cite{restall2002carnap, restall2014pluralism}, tying meaning-identity to shared active formulae. In fact, motivated by limiting results for such approaches in \cite{dicher2016proof}, I-bS was partially developed as a correction and generalisation of sameness-based methods. In this sense, our result can be seen as systematising and validating the findings in \cite{hjortland2013logical}.

Finally, we proposed a novel notion of meaning-extension. We showcased how meaning-extensions allow for meaning comparisons in calculi with different dimensionality by proving that the \textbf{K3}/\textbf{LP}-connectives extend the meaning of the classical \textbf{LK}-connectives in \S 5. As the notion can be applied to any two calculi defined on different dimensions, it allows us to preserve I-bS's feature of cross-calculus meaning comparisons to any pair of logics that can be generated using MUltlog. Beyond its technical merits, meaning-extension also extends the idea of Belnap-style harmony (conservativity and uniqueness) to inter-dimensional cases: whilst harmony typically constrains (meaning-preserving) definability of connectives relative to a base calculus of the same dimension, meaning-extension constrains (meaning-preserving) definability of higher-dimensional connectives relative to a base calculus of a lower dimension.

The main remaining challenge is the fact that each application of I-bS to a new calculus requires a new definability proof relative to the minimal derivability relation, in which inference behaviour can then be measured. Whilst we have shown in \cite{nagler2026inference} that only a limited number of ($21$) connectives receives semantic clauses in 2-dimensional sequent calculi---relative to a minimal derivability relation of $\{\textsc{Id, Cut}\}$---there is no immediate generalisation of this result to the arbitrarily finite-valued case. If future research supplements this result, giving I-bS for any MUltlog-generated calculus will be reduced to matching its operational rules to the right semantic clause(s).

An alternative and more sophisticated solution would be to implement MUltlog-generated sequent calculi into a suitable automated theorem prover with the aim of also automating the measurement of inference behaviour, including the generation of required substructural definability proofs. This approach to the generation of semantic clauses would likely be a significant endeavour due to computability constraints in several substructural combinations. However, it also has the potential of automating not just the generation of P-tS but also genuinely proof-theoretic semantic reasoning.

In either way, the tandem of I-bS and MUltlog proves to be a promising candidate for crossing the pen-and-paper threshold towards computational automation.
\subsection*{Acknowledgements}
I sincerely thank Greg Restall, Luca Incurvati, and Franz Berto for their invaluable mentorship. Special thanks to Colin Caret, Graham Priest, Albert Visser and the Utrecht \textit{PhD Logic Workshop} for their helpful comments and feedback on earlier versions of this material. This research was supported by the Linda and Gordon Bonnyman Charitable Trust via the University of St Andrews.
\printbibliography 
\section*{Appendix}
Unrestricted multiplicative connectives:
\[
    \begin{prooftree}
        \hypo{[1,2,3:A]}
        \hypo{[1,2,3:B]}
        \hypo{[1,2,3:A]}
        \hypo{[1,2,3:B]}
        \infer2[$3\wedge^\prime$]{[1,2: A,B; 3:A\wedge^\prime B]}
        \infer3[$2\wedge$]{[1:A,A,B,B;2:A\wedge B; 3: A\wedge^\prime B]}
        \infer1[$1\wedge$]{[1:A, B,A\wedge B;2: A\wedge B; 3: A\wedge^\prime B]}
        \infer1[$1\wedge$]{[1:A\wedge B,A\wedge B;2: A\wedge B; 3: A\wedge^\prime B]}
        \infer1[$1$\textsc{C}]{[1,2:A\wedge B; 3: A\wedge^\prime B]}
    \end{prooftree}
\]
\[
    \begin{prooftree}
        \hypo{[1,2,3:A]}
        \hypo{[1,2,3:B]}
        \hypo{[1,2,3:A]}
        \hypo{[1,2,3:B]}
        \infer2[$1\vee$]{[1: A\vee B; 2,3: A,B]}
        \infer3[$2\vee$]{[1:A\vee B; 2:A\vee B; 3: A,A,B,B]}
        \infer1[$3\vee^\prime$]{[1,2:A\vee B; 3: A\vee^\prime B,A,B]}
        \infer1[$3\vee^\prime$]{[1,2:A\vee B; 3: A\vee^\prime B,A\vee^\prime B]}
        \infer1[$3$\textsc{C}]{[1,2:A\vee B; 3: A\vee^\prime B]}
    \end{prooftree}
    \]
    \[
    \begin{prooftree}
        \hypo{[1,2,3:A]}
        \hypo{[1,2,3:B]}
        \hypo{[1,2,3:A]}
        \hypo{[1,2,3:B]}
        \infer2[$1\supset$]{[1: A, A\supset B; 2:A,B;3:B]}
        \infer3[$2\supset$]{[1:A,A, A\supset B; 2:A\supset B; 3: B,B]}
        \infer1[$3\supset^\prime$]{[1:A, A\supset B; 2:A\supset B; 3: A\supset^\prime B, B]}
        \infer1[$3\supset^\prime$]{[1:A\supset B; 2:A\supset B; 3: A\supset^\prime B, A\supset^\prime B]}
        \infer1[$3$\textsc{C}]{[1,2:A\supset B; 3: A\supset^\prime B]}
    \end{prooftree}
\]
Unrestricted additive connectives: see Figure 2.\footnote{Some of the displayed instances of \textsc{Weakening} are replaceable by \textsc{Cut} and \textsc{Id} (e.g. 
\begin{prooftree}
    \hypo{[1,2:A;3:A\vee^{\prime} B]}
    \hypo{[1,2,3:A]}
    \infer2[\textsc{Cut}]{[1:A,A;2:A;3:A\vee^{\prime} B]}
\end{prooftree}
), but others are not (e.g.  
\begin{prooftree}
    \hypo{[1:A\wedge B;2,3:A]}
    \infer1[$3$\textsc{W}]{[1:A\wedge B;2:A;3:A,B]}
\end{prooftree}).}
\begin{sidewaysfigure}[htbp]
{\tiny\fontsize{5}{5}
\fbox{
\parbox{\linewidth}{
\[
\begin{prooftree}
    \hypo{[1,2,3:A]}
    \infer1[$1\wedge1$]{[1:A\wedge B;2,3:A]}
    \infer1[\textsc{3W}]{[1:A\wedge B;2:A;3:A,A]}
    \hypo{[1,2,3:B]}
    \infer1[$1\wedge2$]{[1:A\wedge B;2,3:B]}
    \infer1[$3$\textsc{W}]{[1:A\wedge B;2:B;3:A,B]}
    \hypo{[1,2,3:A]}
    \infer1[$1\wedge1$]{[1:A\wedge B;2,3:A]}
    \infer3[$2\wedge1$]{[1:A\wedge B;2:A\wedge B;3: A]}
    \hypo{[1,2,3:A]}
    \infer1[$1\wedge1$]{[1:A\wedge B;2,3:A]}
    \infer1[$3$\textsc{W}]{[1:A\wedge B;2:A;3:A,B]}
    \hypo{[1,2,3:B]}
    \infer1[$1\wedge2$]{[1:A\wedge B;2,3:B]}
    \infer1[$3$\textsc{W}]{[1:A\wedge B;2:B;3:B,B]}
    \hypo{[1,2,3:B]}
    \infer1[$1\wedge2$]{[1:A\wedge B;2,3:B]}
    \infer3[$2\wedge2$]{[1:A\wedge B;2:A\wedge B;3: B]}
    \infer2[$3\wedge^\prime$]{[1:A\wedge B;2:A\wedge B;3: A\wedge^\prime B]}
\end{prooftree}
\]
\[
\begin{prooftree}
    \hypo{[1,2,3:A]}
    \infer1[$3\vee^{\prime}1$]{[1,2:A;3:A\vee^{\prime} B]}
    \infer1[$1$\textsc{W}]{[1:A,A;2:A;3:A\vee^{\prime} B]}
    \hypo{[1,2,3:B]}
    \infer1[$3\vee^{\prime}2$]{[1,2:B;3:A\vee^{\prime} B]}
    \infer1[\textsc{1W}]{[1:A,B;2:B;3:A\vee^{\prime} B]}
    \hypo{[1,2,3:A]}
    \infer1[$3\vee^{\prime}1$]{[1,2:A;3:A\vee^{\prime} B]}
    \infer3[$2\vee1$]{[1:A;2:A\vee B;3: A\vee^{\prime} B]}
    \hypo{[1,2,3:A]}
    \infer1[$3\vee^{\prime}1$]{[1,2:A;3:A\vee^{\prime} B]}
    \infer1[$1$\textsc{W}]{[1:A,B;2:A;3:A\vee^{\prime} B]}
    \hypo{[1,2,3:B]}
    \infer1[$3\vee^{\prime}2$]{[1,2:B;3:A\vee^{\prime} B]}
    \infer1[$1$\textsc{W}]{[1:B,B;2:B;3:A\vee^{\prime} B]}
    \hypo{[1,2,3:B]}
    \infer1[$3\vee^{\prime}2$]{[1,2:B;3:A\vee^{\prime} B]}
    \infer3[$2\vee2$]{[1:B;2:A\vee B;3: A\vee^{\prime} B]}
    \infer2[$1\vee$]{[1:A\vee B;2:A\vee B;3: A\vee^\prime B]}
\end{prooftree}
\]
\[
\begin{prooftree}
    \hypo{[1,2,3:A]}
    \infer1[$3\supset^{\prime}1$]{[2:A;3:A\supset^{\prime} B, A]}
    \infer1[$3$\textsc{W}]{[2:A;3:A\supset^{\prime} B, A, A]}
    \hypo{[1,2,3:B]}
    \infer1[$3\supset^{\prime}2$]{[1,2:B;3:A\supset^{\prime} B]}
    \infer1[$3$\textsc{W}]{[1,2:B;3:A\supset^{\prime} B, A]}
    \hypo{[1,2,3:A]}
    \infer1[$3\supset^{\prime}1$]{[2:A;3:A\supset^{\prime} B, A]}
    \infer3[$2\supset1$]{[2:A\supset B;3: A\supset^{\prime} B, A]}
    \hypo{[1,2,3:A]}
    \infer1[$3\supset^{\prime}1$]{[2:A;3:A\supset^{\prime} B, A]}
    \infer1[$1$\textsc{W}]{[1:B;2:A;3:A\supset^{\prime} B, A]}
    \hypo{[1,2,3:B]}
    \infer1[$3\supset^{\prime}2$]{[1,2:B;3:A\supset^{\prime} B]}
    \infer1[$1$\textsc{W}]{[1:B,B;2:B;3:A\supset^{\prime} B]}
    \hypo{[1,2,3:B]}
    \infer1[$3\supset^{\prime}2$]{[1,2:B;3:A\supset^{\prime} B]}
    \infer3[$2\supset 2$]{[1:B;2:A\supset B;3: A\supset^{\prime} B]}
    \infer2[$1\supset$]{[1:A\supset B;2:A\supset B;3: A\supset^\prime B]}
\end{prooftree}
\]
}}}
\caption{proof of Theorem 1---unrestricted additive connectives}
\end{sidewaysfigure}

\end{document}